\title{
	Strong Normalizability as a Finiteness Structure via the Taylor Expansion of λ-terms
	\thanks{Supported by French ANR Project Coquas (number ANR 12 JS02 006 01).}
}
\titlerunning{Strong Normalizability as a Finiteness Structure}
\author{Michele Pagani\inst{1}  \and Christine Tasson\inst{1}  \and Lionel Vaux\inst{2}}
\institute{
	Université Paris Diderot, CNRS, IRIF UMR 8243, F-75205, Paris, France
	\email{\{christine.tasson, michele.pagani\}@pps.univ-paris-diderot.fr}
	\and
	Aix Marseille Université, CNRS, Centrale Marseille, I2M UMR 7373, F-13453, Marseille, France\\
	\email{lionel.vaux@univ-amu.fr}
}
\begin{document}
\maketitle

\begin{abstract}
In the folklore of linear logic, a common intuition is that the structure of finiteness spaces, introduced by Ehrhard, semantically reflects the strong normalization property of cut-elimination.

We make this intuition formal in the context of the non-deterministic $\lambda$-calculus by introducing a finiteness structure on resource terms, which is such that a $\lambda$-term is strongly normalizing iff the support of its Taylor expansion is finitary. 


An application of our result is the existence of a normal form for the Taylor expansion of any strongly normalizable non-deterministic $\lambda$-term. 
\end{abstract}

\section{Introduction}

%
It is well-known that sets and relations can be presented as a category of modules and linear functions over the boolean semi-ring, giving one of the simplest semantics of linear logic. In~\cite{lamarche92} (see also~\cite{LairdMMP13}), it is shown how to generalize this construction to any complete\footnote{A semi-ring is complete if any sum, even infinite, is well-defined.} semi-ring $\mathcal R$ and yet obtain a model of linear logic. In particular, the composition of two matrices $\phi\in\mathcal R^{A\times B}$ and $\psi\in\mathcal R^{B\times C}$ is given by the usual matrix multiplication:
\begin{equation}\label{eq:weighted_composition}
(\phi ; \psi)_{a,c} \eqdef \sum_{b\in B} \phi_{a,b}\cdot\psi_{b,c}
\end{equation}
The semi-ring $\mathcal R$ must be complete because the above sum might be infinite. This is an issue, because it prevents us from considering standard vector spaces, which are usually constructed over ``non-complete'' fields like reals or complexes.

In order to overcome this problem, Ehrhard introduced the notion of finiteness space \cite{ehrhardfs}. A \emph{finiteness space} is a pair of a set $A$ and a set $\fA$ (called \emph{finiteness structure} in Definition~\ref{def:finiteness}) of subsets of $A$ which is closed under a notion of duality. The point is that, for any field $\mathcal K$ (resp. any semi-ring), the set of vectors in $\mathcal K^{A}$ whose support\footnote{The support of $v\in\mathcal K^{A}$ is the set of those $a\in A$ such that the scalar $v_a$ is non-null.} is in $\fA$ constitutes a vector space (resp. a module) over $\mathcal K$. Moreover, any two matrices $\phi\in\mathcal K^{A\times B}$ and $\psi\in\mathcal K^{B\times 
C}$ whose supports are in resp.\ $\fA\multimap\fB$ and $\fB\multimap\fC$ (the finiteness structures associated with the linear arrow) compose, because the duality condition on the supports makes the terms in the sum of Equation~\eqref{eq:weighted_composition} be zero almost everywhere. This gives rise to a category which is a model of linear logic and its differential extension. 

The notion of finiteness space seems strictly related to the property of normalization. Already in~\cite{ehrhardfs}, it is remarked that the coKleisli category of the exponential comonad is a model of simply typed $\lambda$-calculus, but it is not cpo-enriched and thus cannot interpret (at least in a standard way) fixed-point combinators, so neither PCF nor untyped $\lambda$-calculus. Moreover, in the setting of differential nets, Pagani showed that the property of having a finitary interpretation corresponds to an acyclicity criterion (called \emph{visible acyclicity} \cite{Pag08}) which entails the normalization property of the cut-elimination procedure \cite{Pag09tlca}, while there are examples of visibly cyclic differential nets which do not normalize. 

The goal of this paper is to shed further light on the link between finiteness spaces and normalization, this time considering the non-deterministic untyped $\lambda$-calculus. Since we deal with $\lambda$-terms and not with linear logic proofs (or differential nets), we will speak about formal power series rather than matrices at the semantical level. This corresponds to move from the morphisms of a linear category to those of its coKleisli construction. Moreover, following~\cite{difftaylor}, we describe the monomials of these power series as \emph{resource terms} in normal form. 
 The benefit of this setting is that the interpretation of a $\lambda$-term $M$ as a power series $[|M|]$ can be decomposed in two distinct steps: first, the term $M$ is associated with a formal series $\taylorExp M$ of resource terms possibly with redexes, called the \emph{Taylor expansion} of $M$ (see Table~\ref{table:taylor}); second, one reduces each resource term $t$ appearing in the support $\taylor M$ of $\taylorExp M$ into a normal form $\NF(t)$ and sum up all the results, that is ($\taylorExp M_t$ denotes the coefficient of $t$ in $\taylorExp M$):
\begin{equation}\label{eq:power_series_taylor}
 [|M|] = \sum_{t\in\taylor M} \taylorExp M_t\cdot\NF(t)
\end{equation}

The issue about the convergence of infinite sums appears in Equation~\eqref{eq:power_series_taylor} because there might be an infinite number of resource terms in $\taylor M$ reducing to the same normal form and thus possibly giving infinite coefficients to the formal series $[|M|]$. Ehrhard and Regnier proved in \cite{difftaylor} that this is not the case for deterministic $\lambda$-terms, however the situation gets worse in presence of non-deterministic primitives. If we allow sums $M+N$ representing potential reduction to $M$ or $N$, then 
one can construct terms evaluating to a variable $y$ an infinite number of times, like (where $\mathbf\Theta$ denotes Turing's fixed-point combinator):
\begin{equation}\label{eq:infiniteterm}
\appl{\mathbf\Theta}{\lambda x.(x+y)}\betaRed^\ast\appl{\mathbf\Theta}{\lambda x.(x+y)}+y\betaRed^\ast\appl{\mathbf\Theta}{\lambda x.(x+y)}+y+y\betaRed^\ast\dots
\end{equation}
We postpone to Examples~\ref{ex:lambda-terms} and \ref{ex:undefined_nf} a more detailed discussion of $\appl{\mathbf\Theta}{\lambda x.(x+y)}$, however we can already guess that this interplay between infinite reductions and non-determinism may  produce infinite coefficients. 

One can then wonder whether there are interesting classes of terms where the coefficients of the associated power series can be kept finite. Ehrhard proved in~\cite{Ehrhard10lics} that the terms typable by a non-deterministic variant of Girard's System $\mathcal F$ have always finite coefficients. A by-product of our results is Corollary~\ref{cor:NF}, which is a generalization of Ehrhard's result: every strongly normalizable non-deterministic $\lambda$-term can be interpreted by a power series with finite coefficients. 

The main focus of our paper is however in the means used for obtaining this result rather than on the result itself. The proof in~\cite{Ehrhard10lics} is based on a finiteness structure $\fS$ over the set of resource terms $\resourceTerms$, such that any term $M$ typable in System $\mathcal F$ has the support $\taylor M$ of its Taylor expansion in $\fS$. 
 We show that this method can be both generalized and strengthened in order to characterize strong normalization via finiteness structures. Namely, we give sufficient conditions on a finiteness structure $\fS$ over $\resourceTerms$ such that for every non-deterministic $\lambda$-term $M$: (i) if $M$ is strongly normalizable, then $\taylor M\in\fS$ (Corollary~\ref{cor:realizability}); (ii) if $\taylor M\in\fS$, then $M$ is strongly normalizable (Theorem~\ref{theorem:finitary:implies:SN}). 
 
 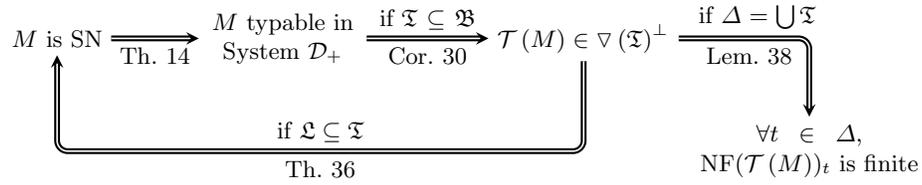
\begin{figure}[t]
\centering
\begin{tikzpicture}
\node (sn) at (0,0) {$M$ is SN};
\node[text width=2cm, text centered] (d) at (3,0) {$M$ typable in\\ System $\mathcal D_+$};
\node (fin) at (7,0) {$\taylor{M}\in\finitarySets[\fT] $};
\node[text width = 4cm, text centered] (taylor) at (10,-1.5) {$\forall t\in\Delta$,\\ $\NF(\taylor{M})_t$ is finite};
\draw[thick,->,>=stealth,double] (sn) --node [below, font=\small] {Th.~\ref{theorem:sn:typable}} (d);
\draw[thick,->,>=stealth,double] (d) --node [below, font=\small] {Cor.~\ref{cor:realizability}} node [above, font=\small, text width=1.7cm, text centered] {if $\fT\subseteq\boundedStructure$
}(fin);
\draw[thick,->,>=stealth,double,rounded corners] (fin) 
|- 
node {} (3.5,-1.5) 
node[above, font=\small] {if $\linearStructure\subseteq\fT$}
node[below, font=\small] {Th.~\ref{theorem:finitary:implies:SN}} 
-| (sn);
\draw[thick,->,>=stealth,double,rounded corners] (fin)  -| 
node [above, font=\small, pos=.3] {if $\resourceTerms=\bigcup\fT$}
node [below, font=\small, pos=.27] {Lem.~\ref{lemma:NF}}  
(taylor);
\end{tikzpicture}
\caption{main results of the paper}\label{fig:main_results}
\end{figure}
\paragraph{Contents. } Section~\ref{sect:preliminary} gives the preliminary definitions: the non-deterministic $\lambda$-calculus, its Taylor expansion into formal series of resource terms and the notion of finiteness structure. The proof of Item (i) splits into Sections~\ref{sect:D+} and~\ref{sect:realizers}, using an intersection type system (Table~\ref{fig:D+}) for characterizing strong normalization. Section~\ref{sect:fin_to_sn} gives the proof of Item (ii) and Section~\ref{sect:conclusion} concludes with Corollary~\ref{cor:NF} about the finiteness of the coefficients of the power series of strongly normalizable terms. Figure~\ref{fig:main_results} sums up the main results of the paper.

\section{Preliminaries}\label{sect:preliminary}
%

\subsection{Non-deterministic $\lambda$-calculus $\ndTerms$}
\label{subsec:NDL}
The non-deterministic $\lambda$-calculus is defined by the following grammar\footnote{We use Krivine's notation with the standard conventions, see \cite{lcalculen} for reference.}
:
\begin{align*}
\text{$\lambda$-terms }&\ndTerms:& M \eqdef x \mid \labs x M \mid \appl M M \mid M + M
\end{align*}
subject to $\alpha$-equivalence and to the following identities:
%
\[
M+N =  N+M, \qquad (M+N)+P = M+(N+P),
\]
\vspace{-.3cm}
\[
\labs x {(M+N)} =  \labs xM + \labs xN,
\qquad
\appl {M+N}P =  \appl MP + \appl NP.
\]
The last two equalities state that abstraction is a linear operation (i.e.\ commutes with sums) while application is linear only in the function but not in the argument (i.e.~$\appl P{(M+N)}\neq  \appl PM + \appl PN$). Notice also that the sum is not idempotent: $M+M\neq M$. This is a crucial feature for making a difference between terms reducing to a value once, twice, more times or an infinite number of times (see the discussion about Equation~\eqref{eq:infiniteterm} in the Introduction). 

Although this follows an old intuition from linear logic, the first extension of the $\lambda$-calculus with (a priori non idempotent) sums subject to these identities was, as far as we know, the differential $\lambda$-calculus of Ehrhard and Regnier \cite{lambdadiff}. This feature is now quite standard in the literature following this revival of quantitative semantics.


The (capture avoiding) substitution of a term for a variable is defined as
usual and $\beta$-reduction $\betaRed$ is defined as the context closure of:
\[ \appl {\labs xM} N \betaRed \subst MxN.\]
We denote as $\betaRed^*$ the reflexive-transitive closure of $\betaRed$. 

\begin{example}\label{ex:lambda-terms}
Some $\lambda$-terms which will be used in the following are:
\[
\mathbf\Delta\eqdef\lambda x.\appl xx,\qquad
\mathbf\Omega\eqdef\appl{\mathbf\Delta}{\mathbf\Delta},\qquad
\mathbf{\Delta_3}\eqdef\lambda x.\appl x{xx},
\qquad\mathbf{\Omega_3}\eqdef\appl{\mathbf{\Delta_3}}{\mathbf{\Delta_3}},
\]
\[
\mathbf\Theta\eqdef\appl{\lambda xy. \appl{y}{\appl x{xy}}}{\lambda xy. \appl{y}{\appl x{xy}}}.   
\]
The term $\mathbf\Omega$ is the prototypical diverging term, reducing to itself in one single $\beta$-step. $\mathbf{\Omega_3}$ is another example of diverging term, producing terms of greater and greater size: 
$\mathbf{\Omega_3}\betaRed\appl{\mathbf{\Omega_3}}{\mathbf{\Delta_3}}\betaRed\appl{\mathbf{\Omega_3}}{\mathbf{\Delta_3}\mathbf{\Delta_3}}\betaRed\dots$
It will be used in Remark~\ref{rk:counterexample_singletons} to prove the subtlety of characterizing strong normalization with finiteness spaces. 
The Turing fixed-point combinator $\mathbf\Theta$ has been used in the Introduction to construct $\appl{\mathbf\Theta}{\lambda x.(x+y)}$ as an example of non-deterministic $\lambda$-term morally reducing to normal forms with infinite coefficients (Equation~\eqref{eq:infiniteterm}).
 Notice that, by abstraction linearity, $\appl{\mathbf\Theta}{\lambda x.(x+y)}=\appl{\mathbf\Theta}{(\lambda x.x+\lambda x.y)}$, however $\appl{\mathbf\Theta}{(\lambda x.x+\lambda x.y)}\neq\appl{\mathbf\Theta}{\lambda x.x}+\appl{\mathbf\Theta}{\lambda x.y}$, because application is not linear in the argument. This distinction is crucial: the latter term reduces to $(\appl{\mathbf\Theta}{\lambda x.x}) + y$,  with $\appl{\mathbf\Theta}{\lambda x.x}$ reducing to itself without producing any further occurence of $y$.
\end{example}

\subsection{Resource calculus $\resourceTerms$ and Taylor expansion}
\paragraph{The syntax.} \emph{Resource terms} and \emph{bags} are given by  mutual induction:
\begin{align*}
\text{resource terms }&\resourceTerms:&	s & \eqdef x \mid \lambda x.s \mid \rappl s{\ms s}&&&
\text{bags }&\bags:&\ms s & \eqdef  1 \mid s\cdot \ms s
\end{align*}
subject to both $\alpha$-equivalence and permutativity of $(\cdot)$: we most
often write $[s_1,\dotsc,s_n]$ for $s_1\cdot (\cdots \cdot (s_n \cdot 1)\cdots)$ and then
$[s_1,\dotsc,s_n]= [s_{\sigma(1)},\dotsc,s_{\sigma(n)}]$ for any permutation
$\sigma$. In other words, bags are finite multisets of terms. 


\paragraph{Linear extension.}
Let $\R$ be a rig (a.k.a.~semi-ring). Of particular interest are the  rigs $\B\eqdef(\set{0,1}, \max, \min)$ of booleans, $\NN \eqdef(\N, +, \times)$ of non-negative integers and $\Q\eqdef(\mathbf Q, +, \times)$ of rational numbers. We denote as $\lcomb \R\resourceTerms$ (\resp $\lcomb\R\bags$) the set of all formal (finite or infinite) linear combinations of resource terms (\resp bags) with coefficients in $\R$. If $a\in\lcomb\R\resourceTerms$ (\resp $\ms a\in\lcomb\R\bags$) and $s\in\resourceTerms$ (\resp $\ms s\in\bags$), then $a_s\in\R$ (\resp $\ms a_{\ms s}\in\R$) denotes the coefficient of $s$ in $a$ (\resp $\ms s$ in $\ms a$). As is well-known, $\lcomb\R\resourceTerms$ is endowed with a structure of $\R$-module, where addition and scalar multiplication are defined component-wise, i.e.\ for $a,b\in\lcomb\R\resourceTerms$ and $\alpha\in\R$: $(a+b)_s\eqdef a_s+b_s$, and $(\alpha a)_s\eqdef \alpha a_s$. We will write $\support a\subseteq\resourceTerms$ for the support of $a$: $\support a=\set{s\in\resourceTerms\st a_s\not=0}$.

Moreover, each resource calculus constructor extends to $\lcomb \R\resourceTerms$ component-wise, i.e.\ for any $a,a_1,\dots, a_n\in \lcomb\R\resourceTerms$ and $\ms a\in\lcomb\R\bags$, we set:
\[\textstyle
\labs x a \eqdef\sum_{s\in\resourceTerms}a_s\labs x s,\qquad
\rappl a{\ms a} \eqdef \sum_{s\in\resourceTerms,\,\ms s\in\bags} a_s\ms a_{\ms s}\rappl s{\ms s},
\]
\[\textstyle
	\mset{a_1,\dots, a_n}\eqdef\sum_{s_1,\dots,s_n\in\resourceTerms}\pa{a_1}_{s_1}\cdots \pa{a_n}_{s_n}\mset{s_1,\dots,s_n}.
\]
Notice that the last formula is coherent with the notation of the concatenation of bags as a product since it expresses a distributivity law. 
In particular, we denote $a^n\eqdef \underbrace{\mset{a,\dots, a}}_n$ and if $\Q\subseteq\R$, $\displaystyle a^!\eqdef\sum_{n\in\N}\tfrac 1{n!}a^n$.


In the case $\R=\B$, notice that $\lcomb\B\resourceTerms$ is the power-set lattice $\powerset\resourceTerms$, so that we can use the set-theoretical notation: e.g. writing $s\in a$ instead of $a_s\neq 0$, or $a\cup b$ for $a+b$. Also, in that case the preceding formulas lead to:
for $a,b\subseteq\resourceTerms$, $t\in\resourceTerms$, $\ms a\subseteq\bags$: $\labs xa\eqdef\set{\labs xs\st s\in a}$, $\prom a\eqdef\set{\mset{s_1,\cdots,s_n}\st s_1,\cdots,s_n\in a}$, $\rappl t{\ms a}\eqdef \set{\rappl t{\ms s}\st \ms s\in \ms a}$ and $\rappl a{\ms a}\eqdef \Union\set{\rappl s{\ms a}\st s\in a}$.

\paragraph{Taylor expansion.} Ehrhard and Regnier have used in~\cite{difftaylor} the rig of rational numbers to express the $\lambda$-terms as formal combinations in $\lcomb\Q\resourceTerms$. We refer to this translation $\taylorExp{(\;)}: \Lambda_+\mapsto\lcomb\Q\resourceTerms$ as the \emph{Taylor expansion} and we recall it in Table~\ref{table:taylor} by structural induction. The supports of these expansions can be seen as a map $\taylor{\;}: \Lambda_+\mapsto\lcomb\B\resourceTerms$ and directly defined by induction as in Table~\ref{table:taylor_support}. 

\begin{table}[t]
\centering
\begin{subtable}[b]{0.38\linewidth}
\begin{align*}
\taylorExp x&\eqdef x\\
\taylorExp {(\labs x M)}&\eqdef \labs x{\taylorExp M}\\
\taylorExp{(\appl MN)}&\eqdef \sum_{n\in\N}\frac{1}{n!}\rappl{\taylorExp M}{\left(\taylorExp N\right)}^n\\
\taylorExp{\left(M+N\right)}&\eqdef \taylorExp M +\taylorExp N
\end{align*}
\caption{expansion $\taylorExp{(\;)}:\ndTerms\mapsto\lcomb\Q\resourceTerms$}
\label{table:taylor}
\end{subtable}
\qquad
\begin{subtable}[b]{0.49\linewidth}
\begin{align*}
  \taylor{x}&\eqdef\set x\\
  \taylor{\lambda x.M}&\eqdef\set{\labs xs\st s\in\taylor M}\\
  \taylor{\appl MN}&\eqdef\set{\rappl s{\ms t}\st s\in\taylor M, \ms t\in\prom{\taylor N}}\\[7pt]
   \taylor{M+N}&\eqdef\taylor{M}\union\taylor{N}
\end{align*}
\caption{support $\taylor{\;}:\ndTerms\mapsto\lcomb\B\resourceTerms$}
\label{table:taylor_support}
\end{subtable}
\caption{Definition of the Taylor expansion $\taylorExp{(\;)}$ and of its support $\taylor{\;}$.}
\label{table:taylor and its support}
\end{table}

\begin{example}\label{ex:taylor_expansion}
From the above definitions we have:
\[
\taylorExp{\mathbf\Delta}=\sum_{n}\frac1{n!}\lambda x.\rappl{x}{x^n},\qquad
\taylorExp{\mathbf{\Delta_3}}=\sum_{n,m}\frac1{n!m!}\lambda x.\rappl{\rappl{x}{x^n}}{x^m}.
\]
Let us denote as $\delta_n$ (resp.\ $\delta_{n,m}$) the term $\lambda x.\rappl{x}{x^n}$ (resp.\ $\lambda x.\rappl{\rappl{x}{x^n}}{x^m}$). We can then write:
\begin{eqnarray}
	\label{eq:taylorOmega}\taylorExp{\mathbf\Omega}&=&\sum_{k}\frac1{k!}\sum_{n_0,\dots,n_k}\frac1{n_0!\cdots n_k!}\rappl{\delta_{n_0}}{\mset{\delta_{n_1},\dots,\delta_{n_k}}},\\
\label{eq:taylorOmega3}\taylorExp{\mathbf{\Omega_3}}&=&\sum_{k}\frac1{k!}\sum_{
	\substack{
		n_0,\dots,n_k\\
		m_0,\dots,m_k
	}
}\frac1{n_0!m_0!\cdots n_k!m_k!}\rappl{\delta_{n_0,m_0}}{\mset{\delta_{n_1,m_1},\dots,\delta_{n_k,m_k}}}.
\end{eqnarray}
\end{example}

 It is clear that the resource terms appearing with non-zero coefficients in $\taylorExp M$ describe the structure of $M$ taking an explicit number of times the argument of each application, and this recursively. The rôle of the rational coefficients will be clearer once defined the reduction rules over $\resourceTerms$ (see Example~\ref{ex:taylor_reduction}). 
\paragraph{Operational semantics. } Let us write $\degree xt$ for the number of free occurrences of a variable $x$ in a resource term $t$. We define the \emph{differential substitution} of a variable $x$ with a bag $\mset{s_1,\dotsc,s_n}$ in a resource term $t$, denoted $\dsubst tx{\mset{s_1,\dotsc,s_n}}$: it is a finite formal sum of resource terms, 
 which is zero whenever $\degree xt\neq n$; otherwise it is the sum of all possible terms obtained by linearly replacing each free occurrence of $x$ with exactly one $s_i$, for $i=1,\dots, n$. Formally,
\begin{equation}\label{eq:resource}
\dsubst tx{\mset{s_1,\dotsc,s_n}}\eqdef 
\begin{cases}
\displaystyle\sum_{f\in\perm n}\nsubst t{x_1}{s_{f(1)}}{x_n}{s_{f(n)}}&\text{if $\degree xt=n$,}\\
0&\text{otherwise,}
\end{cases}
\end{equation}
\noindent where $\perm n$ is the group of permutations over $n=\{1,\dots, n\}$ and $x_1,\dots, x_n$ is any enumeration of the free occurrences of $x$ in $t$, so that $\subst t{x_i}{s_{f(i)}}$ denotes the term obtained from $t$ by replacing the $i$-th free occurrence of $x$ with $s_{f(i)}$. 
Then, we give a linear extension of the differential substitution: if $a\in \lcomb\R\resourceTerms$ and $\ms a\in\lcomb\R\bags$, we set: $\dsubst ax {\ms a}= \sum_{t\in\resourceTerms, \ms s\in\bags}a_t\, \ms a_{\ms s}\dsubst tx{\ms s}$.

The resource reduction $\dbetaRed$ is then the smallest relation satisfying:
\[
\rappl{\lambda x.t}{\mset{s_1,\dotsc,s_n}}
\dbetaRed
\dsubst tx{\mset{s_1,\dotsc,s_n}}
\]
and moreover linear and compatible with the resource calculus constructors. Spelling out these two last conditions: for any $t,u\in\resourceTerms$, $\ms s\in\bags$, $a,b\in\lcomb\R\resourceTerms$, $\alpha\in\R\setminus\set 0$, whenever $t\dbetaRed a$, we have: (compatibility) $\lambda x.t\dbetaRed \lambda x.a$, $\rappl{t}{\ms s}\dbetaRed \rappl{a}{\ms s}$, $\rappl{u}{t\cdot\ms s}\dbetaRed \rappl{u}{a\cdot\ms s}$, and (linearity) $\alpha t+b\dbetaRed \alpha a+b$.


%

\begin{proposition}[\cite{difftaylor}]\label{prop:nf_resource}
Resource reduction $\dbetaRed$ is confluent over the whole $\lcomb\R\resourceTerms$ and it is normalizing on the sums in $\lcomb\R\resourceTerms$ having a finite support.
\end{proposition}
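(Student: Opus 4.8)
The statement is from~\cite{difftaylor}; I sketch a proof in three steps.

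\emph{Size strictly decreases.} Write $|t|$ for the number of nodes of a resource term $t$, bags being measured likewise. The base step $\rappl{\lambda x.t}{\mset{s_1,\dotsc,s_n}}\dbetaRed\dsubst tx{\mset{s_1,\dotsc,s_n}}$ yields either $0$ — when $\degree xt\neq n$, so the reduct has empty support — or a sum each summand of which is $t$ with its $n$ free leaf-occurrences of $x$ replaced bijectively by $s_1,\dotsc,s_n$, hence of size $|t|-n+\sum_i|s_i|$, strictly smaller than the redex $\rappl{\lambda x.t}{\mset{s_1,\dotsc,s_n}}$. Since resource contexts are linear in their hole, an easy induction on the derivation of a step gives: whenever a single resource term $t$ reduces, $t\dbetaRed a$, every $s\in\support a$ satisfies $|s|<|t|$. (The crucial feature is that differential substitution is non-duplicating: each $s_i$ is used exactly once in each summand.)

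\emph{Normalization on finite supports.} Unfolding the definition, a step $a\dbetaRed a'$ always has the shape $\alpha t+b\dbetaRed\alpha c+b$ with $t\in\resourceTerms$, $\alpha\neq 0$, $t\notin\support b$ and $t\dbetaRed c$; by the previous step $|s|<|t|$ for every $s\in\support c$. To $a$ of finite support associate the finite multiset $\mu(a)$ over $\N$ listing, with multiplicity, the sizes of the terms of $\support a$. Then $\mu(a')$ is obtained from $\mu(a)$ by removing the value $|t|$ and inserting finitely many strictly smaller ones (cancellations in $\R$ only remove further values), so $\mu(a')<\mu(a)$ in the Dershowitz--Manna multiset extension of $(\N,<)$, which is well-founded. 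Hence $\dbetaRed$ is strongly normalizing — a fortiori normalizing — on finite-support elements.

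\emph{Confluence.} I would first prove local confluence: if $a\dbetaRed b_1$ and $a\dbetaRed b_2$, then $b_1$ and $b_2$ have a common $\dbetaRed^\ast$-reduct. When the two contracted redexes lie in distinct resource terms of $\support a$, or at disjoint positions of the same term, the steps are independent and the diagram closes by performing the other contraction on each side. When they are nested, it closes using two substitution lemmas, both immediate from the linearity of differential substitution: (i) $t\dbetaRed c$ implies $\dsubst tx{\ms s}\dbetaRed^\ast\dsubst cx{\ms s}$, and (ii) $s_i\dbetaRed c_i$ implies $\dsubst tx{\mset{\dotsc,s_i,\dotsc}}\dbetaRed^\ast\dsubst tx{\mset{\dotsc,c_i,\dotsc}}$. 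Newman's lemma together with the previous step then gives confluence of $\dbetaRed$, hence a well-defined normal form, on all finite-support elements. Confluence over the whole $\lcomb\R\resourceTerms$ follows by localization: any $\dbetaRed^\ast$ consists of finitely many steps, each of the shape above and so touching a single resource term; across two given reductions of $a$, only finitely many terms of $\support a$ are ever rewritten and only finitely many fresh terms are created, so — collecting these into a finite set $F$ — one writes $a=a|_F+a|_{\support a\setminus F}$ with the second summand inert along both reductions, applies the finite-support case to $a|_F$, and transports the join back by re-adding the inert part. I expect local confluence to be the real obstacle: over a rig with additive inverses such as $\Q$, contracting one redex may make the term hosting the other redex — or a term needed to reconstruct it — cancel out, so the residual step is no longer available in one step on $b_1$ or $b_2$ and must be recovered through a few extra $\dbetaRed$-steps; pinning down exactly which resource terms each branch of the diagram may affect, as required for the localization, also takes some care.
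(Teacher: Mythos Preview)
The paper gives no proof of this proposition: it is simply quoted from \cite{difftaylor}. Your sketch therefore offers more than the paper does, and the first step --- size strictly decreases along $\dbetaRed$ on single resource terms, by linearity of differential substitution --- is correct and is indeed the key combinatorial fact.

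There is however a genuine gap in the second and third steps. You assert that every step $a\dbetaRed a'$ has the shape $\alpha t+b\dbetaRed\alpha c+b$ with $t\notin\support b$, but the linearity rule as stated imposes no such constraint: $\alpha$ may be any nonzero scalar and $b$ any element of $\lcomb\R\resourceTerms$. Over a rig such as $\Q$ one can therefore split coefficients indefinitely. Concretely, if $t$ is a single redex with $t\dbetaRed c$, the sequence $a_n\eqdef(1-2^{-n})\,c+2^{-n}\,t$ satisfies $a_n\dbetaRed a_{n+1}$ for every $n$ (take $\alpha=2^{-(n+1)}$ and $b=2^{-(n+1)}t+(1-2^{-n})c$), so strong normalization \emph{fails} on finite-support sums over $\Q$; your multiset measure $\mu$ does not decrease along such a step because $t$ remains in the support. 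This in turn invalidates the appeal to Newman's lemma in your confluence argument: without termination, local confluence does not entail confluence.

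Weak normalization --- which is all the proposition claims --- is easily salvaged: always choose $\alpha$ to be the full coefficient of $t$ in $a$, so that $t\notin\support b$ by construction, and then your multiset argument does apply to that particular strategy. Confluence, by contrast, needs a different route: either a direct parallel-reduction argument establishing a diamond (or strong-confluence) property that does not rest on termination, or a reduction to the case $\R=\B$, where coefficient splitting is impossible and your Newman-based argument goes through, followed by a transfer back to general $\R$ via the support map. The obstacle you flagged --- additive cancellation over $\Q$ spoiling local confluence --- is real but secondary; the primary issue is that the termination hypothesis of Newman's lemma is simply unavailable at the stated generality.
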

Proposition~\ref{prop:nf_resource} shows that any single resource term $t\in\resourceTerms$ has a unique normal form that we can denote as $\NF(t)$. What about possibly infinite linear combinations in $\lcomb\R\resourceTerms$? We would like to extend the normal form operator component-wise, as follows:
\begin{equation}\label{eq:normal_form}
\NF(a)\eqdef\sum_{t\in\resourceTerms} a_t\cdot\NF(t)
\end{equation}

\begin{example}\label{ex:undefined_nf}
The sum in Equation~\eqref{eq:normal_form} can be undefined for general $a\in\lcomb\R\resourceTerms$. Take $a=\lambda x.x + \rappl{\lambda x.x}{\mset{\lambda x.x}}+\rappl{\lambda x.x}{\mset{\rappl{\lambda x.x}{\mset{\lambda x.x}}}}+\dots$: any single term of this sum reduces to $\lambda x.x$, hence $\NF(a)_{\lambda x.x}$ is infinite.

Another example is given by the Taylor expansion of the term in Equation~\eqref{eq:infiniteterm}: one can check that the sum defining $\NF\bigl(\taylorExp{\pa{\appl{\mathbf\Theta}{\lambda x.(x+y)}}}\bigr)_{y}$ following Equation~\eqref{eq:normal_form} is infinite because, for all $n\in\mathbb N$, there is a resource term of the form $\rappl{t_n}{\mset{(\lambda x.x)^n,\lambda x.y}}\in\taylor{\appl{\mathbf\Theta}{\lambda x.(x+y)}}$, reducing to $y$. A closer inspection of the resulting coefficients (that we do not develop here) moreover shows that this infinite sum has unbounded partial sums in $\Q$, hence it diverges in general.

In fact, Corollary~\ref{cor:NF} ensures that if $a$ is the Taylor expansion of a strongly normalizing non-deterministic $\lambda$-term then Equation~\eqref{eq:normal_form}  is well-defined. 
\end{example}

\begin{example}\label{ex:taylor_reduction}
Recall the expansions of Example~\ref{ex:taylor_expansion} and consider $\taylorExp{(\appl{\mathbf\Delta}{\lambda x.x})}=\sum_{n,k}\frac1{n!k!}\rappl{\lambda x.\rappl{x}{\mset{x^n}}}{\mset{(\lambda x.x)^k}}$. The resource reduction applied to a term of this sum gives zero except for $k=n+1$; in this latter case we get $(n+1)!\rappl{\lambda x.x}{\mset{(\lambda x.x)^n}}$. Hence we have: $\taylorExp{(\appl{\mathbf\Delta}{\lambda x.x})}\dbetaRed\sum_{n}\frac{1}{n!}\rappl{\lambda x.x}{\mset{(\lambda x.x)^n}}$, because the coefficient $(n+1)!$ generated by the reduction step is erased by the fraction $\frac1{k!}$ in the definition of Taylor expansion. Then, the term $\rappl{\lambda x.x}{\mset{(\lambda x.x)^n}}$ reduces to zero but for $n=1$, in the latter case giving $\lambda x.x$. So we have: $\NF(\taylorExp{(\appl{\mathbf\Delta}{\lambda x.x})})=\lambda x.x=\taylorExp{(\NF(\appl{\mathbf\Delta}{\lambda x.x}))}$.

The commutation between computing normal forms and Taylor expansions has been proven in general for deterministic $\lambda$-terms \cite{bohmtaylor}\footnote{The statement proven in \cite{bohmtaylor} is actually more general, because it considers (possibly infinite) B\"ohm trees instead of the normal forms.} and witnesses the solidity of the definitions. The general case for $\ndTerms$ is still an open issue. 
\end{example}

\begin{example}\label{ex:taylor_reduction_omega}
Recall the notation of Equation~\eqref{eq:taylorOmega} from Example~\ref{ex:taylor_expansion} expressing the sum $\taylorExp{\mathbf\Omega}$. All terms with $n_0\neq k+1$ reduce to zero in one step. For $n_0=k+1$, we have that a single term rewrites to $\sum_{f\in\perm k}\rappl{\delta_{n_{f(1)}}}{\mset{\delta_{n_{f(2)}},\dots,\delta_{n_{f(k)}}}}$, which is a sum of terms still in $\taylor{\mathbf\Omega}$, but with smaller size \wrt the redex. Therefore, every term of $\taylorExp{\mathbf\Omega}$ eventually reduces to zero, after a reduction sequence whose length depends on the initial size of the term, and whose elements are sums with supports in $\taylor{\mathbf\Omega}$. 
This is in some sense the way Taylor expansion models the unbounded resource consumption of the loop $\mathbf\Omega\betaRed\mathbf\Omega$ in $\lambda$-calculus. 

We postpone the discussion of the reduction of $\taylorExp{\mathbf\Omega_3}$ until Remark~\ref{rk:counterexample_singletons}.
\end{example}

\subsection{Finiteness structures induced by antireduction}

Let us get back to Equation~\eqref{eq:normal_form}, and consider it pointwise:
for all $s\in\resourceTerms$ in normal form, we want to set 
$\NF(a)_s = \sum_{t\in\resourceTerms} a_t\cdot\NF(t)_s$.
Notice that this series can be obtained as the inner product
between $a$ and the vector $\cone s$ with
$(\cone s)_t=\NF(t)_s$: one can think of $s$ as 
a test, that $a$ passes whenever the sum converges.

There is one very simple condition that one can impose on a formal series to
ensure its convergence: just assume there are finitely many non-zero terms.
This seemingly dull remark is in fact the starting point of the definition 
of finiteness spaces, introduced by Ehrhard \cite{ehrhardfs} and discussed in the Introduction. 
The basic construction is that of a finiteness structure:
\begin{definition}[\cite{ehrhardfs}]\label{def:finiteness}
	Let $A$ be a fixed set.
	A \emph{structure} on $A$ is any set of subsets $\fA\subseteq\powerset A$.
	For all subsets $a$ and $a'\subseteq A$, we write $a\perp a'$
	whenever $a\inter a'$ is finite.
	For all structure $\fA\subseteq \powerset A$, we
	define its dual $\dual\fA=\set{a\subseteq A\st \forall a'\in\fA,\ a\perp a'}$.
	A \emph{finiteness structure} on $A$ is any such $\dual\fA$.
\end{definition}
Notice that:  $\fA\subseteq\fA^{\perp\perp}$, also $\fA\subseteq\fA'$ entails $\dual{\fA'}\subseteq\dual\fA$, hence $\dual{\fA}=\fA^{\perp\perp\perp}$.

Let $\fC_0=\set{\support{\cone s}\st s\in\resourceTerms\text{, in normal form}}\subseteq\powerset{\resourceTerms}$, we obtain that $\support a\in\dual{\fC_0}$ iff Equation~\eqref{eq:normal_form} involves only pointwise finite sums.
So, one is led to focus on support sets only, leaving out
coefficients entirely. Henceforth, unless specified otherwise, we will thus
stick to the case of $\R=\B$, and use set-theoretical notations only.
This approach of ensuring the normalization of Taylor expansion \emph{via} a finiteness structure was first used by Ehrhard \cite{Ehrhard10lics} for a non-deterministic variant of System $\cF$.
Our paper strengthens Ehrhard's result in several directions. In order to state them, we introduce a construction of finiteness structures on $\resourceTerms$ induced by anticones for the reduction order defined as follows:
\begin{definition}\label{def:reduce}
	For all $s,t\in\resourceTerms$, we write $t\reducesTo s$ whenever there
	exists a reduction $t\dbetaRed^* a$ with $s\in a$.
\end{definition}
It should be clear that this defines a partial order relation
(in particular we have antisymmetry because $\dbetaRed$ terminates).
\begin{definition}\label{def:cones}
  If $a\subseteq\resourceTerms$, $\cone a \eqdef \set{t\in
    \resourceTerms\st \exists s\in a,\ t\ge s}$ is the cone of
	antireduction over $a$.\footnote{Observe that $\cone{\set s}=\cone s$ (up to the identification of $\lcomb\B\resourceTerms$ with $\powerset\resourceTerms$).} For any structure
  $\fT\subseteq\powerset{\resourceTerms}$, we write
  $\cones\fT=\set{\cone a\st a\in\fT}$.
\end{definition}
We can consider the elements of $\fT$ as tests, and say a set $a\subseteq\resourceTerms$
passes a test $a'\in\fT$ iff $a\perp\cone a'$. The structure of sets 
that pass all tests is exactly $\dual{\cones{\fT}}$.
Then Ehrhard’s result can be rephrased as follows:
\begin{theorem}[\cite{Ehrhard10lics}]\label{th:thomas_theorem}
	If $M\in\ndTerms$ is typable in System $\cF$ then 
	$\taylor M\in\dual{\cones{\fS_{\text{sgl}}}}$
	where $\fS_{\text{sgl}}\eqdef\set{\set s\st s\in\resourceTerms}$.
\end{theorem}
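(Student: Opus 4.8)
The statement says that $\taylor M$ meets every antireduction cone finitely: unfolding Definitions~\ref{def:reduce}--\ref{def:cones}, it asks that for each resource term $s$ only finitely many $t\in\taylor M$ satisfy $t\reducesTo s$, i.e.\ $t\dbetaRed^\ast a$ with $s\in a$. The plan is to obtain this by a Tait--Girard reducibility argument carried out directly on resource terms and indexed by System~$\cF$ types. One fixes a class of \emph{finiteness candidates} — sets $\mathcal X\subseteq\powerset\resourceTerms$ that are included in $\dual{\cones{\fS_{\text{sgl}}}}$, contain $\emptyset$ and every singleton $\set x$ of a term variable (more generally every ``neutral'', head-variable set), and are closed under finite unions and under a suitable notion of head $\beta$-expansion described below — and interprets each type $A$, relative to a valuation of its type variables by candidates, by a candidate $|A|$: an arrow $A\to B$ by the \emph{resource arrow} $|A\to B|=\set{a\subseteq\resourceTerms\st\forall b\in|A|,\ \rappl a{\prom b}\in|B|}$, which makes the application rule hold by definition, and $\forall X.A$ by the intersection of the $|A|$ taken over all candidate values of $X$.

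That each $|A|$ is again a candidate splits into an easy and a hard part. The easy half: a subset of a finitary set is finitary, so intersections of candidates stay inside $\dual{\cones{\fS_{\text{sgl}}}}$; and if $a\in|A\to B|$ then, feeding $a$ the empty bag, $\rappl a{\prom\emptyset}=\set{\rappl t1\st t\in a}$ lies in $|B|\subseteq\dual{\cones{\fS_{\text{sgl}}}}$, whence — using that $t\reducesTo s$ implies $\rappl t1\reducesTo\rappl s1$ by context closure of $\dbetaRed$ — finiteness of $\rappl a{\prom\emptyset}\inter\cone{(\rappl s1)}$ forces that of $a\inter\cone s$; so $|A\to B|\subseteq\dual{\cones{\fS_{\text{sgl}}}}$ as well (in particular $\dual{\cones{\fS_{\text{sgl}}}}$ is itself a candidate, available to interpret free type variables). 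The hard half, which I expect to be the main obstacle, is closure under head $\beta$-expansion: whenever $R$ belongs to a candidate and $Q$ is a set of redexes $\rappl{\lambda x.p}{\ms t}$ all of whose head reducts $\dsubst px{\ms t}$ have their summands in $R$, then $Q$ belongs to that candidate too. The key combinatorial fact is a finiteness property of differential substitution: for a \emph{fixed} resource term $r$ there are only finitely many pairs $(p,\ms t)$ with $r\in\dsubst px{\ms t}$, since such a $p$ is $r$ with some pairwise-disjoint subterm occurrences carved out and renamed to $x$ and $\ms t$ the multiset of those subterms, and $r$ has finitely many subterm occurrences. Together with strict size-decrease of $\dbetaRed$ — whence its termination (Proposition~\ref{prop:nf_resource}) — and a postponement argument letting one fire the outermost redex of a $q\in Q$ first, this bounds, for each $s$, the $q\in Q$ with $q\reducesTo s$: each such $q$ is a head $\beta$-expansion of one of the finitely many terms in $R\inter\cone s$, and each of those has finitely many such expansions.

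With the interpretation in place, one proves the reducibility invariant by induction on the typing derivation: if $x_1:A_1,\dots,x_n:A_n\vdash M:A$ in System~$\cF$ and $a_i\in|A_i|$ for all $i$, then the support of the simultaneous differential substitution $\dsubst{\taylor M}{\vec x}{\prom{\vec a}}$ belongs to $|A|$. The variable rule is immediate; the rule for $M+N$ is handled by $\taylor{M+N}=\taylor M\union\taylor N$ together with closure of $|A|$ under finite unions, so non-determinism costs nothing here; the application rule is the very definition of the resource arrow; the $\forall$-rules use the intersection interpretation and a substitution lemma relating $|A[X:=B]|$ to the interpretation of $A$ with $X$ valued at $|B|$; and the abstraction rule is exactly where the head $\beta$-expansion closure is consumed — applying $\labs x{\dsubst{\taylor M}{\vec x}{\prom{\vec a}}}$ to some $b\in|A|$ yields redexes $\rappl{\lambda x.p}{\ms t}$ whose head reducts union to $\dsubst{\taylor M}{\vec x,x}{\prom{\vec a},\prom b}\in|B|$ by the induction hypothesis. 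Finally, instantiating the invariant at $a_i=\set{x_i}$ makes the differential substitution act as the identity, so $\taylor M\in|A|\subseteq\dual{\cones{\fS_{\text{sgl}}}}$, which is the claim.
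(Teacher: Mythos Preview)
Your overall plan---Tait--Girard reducibility on System~$\cF$ types with candidates inside $\dual{\cones{\fS_{\text{sgl}}}}$---is indeed the route of Ehrhard's original proof. Note that the present paper does not reprove Theorem~\ref{th:thomas_theorem} directly: it is quoted as a prior result, and the paper's own machinery (Sections~\ref{sect:D+} and~\ref{sect:realizers}) instead yields Corollary~\ref{cor:realizability}, passing through intersection types (System~$\dplus$) rather than System~$\cF$, and through abstract ``adapted'' structures (Definition~\ref{def:adapted}) rather than bespoke $\cF$-indexed candidates; the conclusion for $\fS_{\text{sgl}}$ then follows from $\fS_{\text{sgl}}\subseteq\boundedStructure$. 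So the two routes genuinely differ: yours is direct on~$\cF$; the paper's buys the full characterisation of strong normalisation and uniformity in the test structure~$\fT$.

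There is, however, a real gap in your ``hard half''. Your head-$\beta$-expansion closure is stated for an \emph{arbitrary} set $Q$ of redexes whose head reducts lie in some $R$ of the candidate, and you argue that each $q\in Q\inter\cone s$ arises, via postponement, as a head expansion of some $r\in R\inter\cone s$. This breaks when the head redex is never fired along $q\reducesTo s$: if $s=\rappl{\labs x{p'}}{\ms t'}$ is itself a redex, then $q=\rappl{\labs xp}{\ms t}$ with $p\reducesTo p'$ internally, and no summand of $\dsubst px{\ms t}$ need lie above $s$. Worse, your hypothesis on $Q$ is vacuous whenever $\dsubst px{\ms t}=0$. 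Take $R=\emptyset$ (which is in every candidate) and let $Q$ be the set of all redexes $\rappl{\labs xp}{\ms t}$ with $\degree xp$ distinct from the length of~$\ms t$: every head reduct is $0$, hence has all its summands in $R$; yet $Q\inter\cone s$ is infinite for $s=\rappl{\labs xx}{\mset{y,y}}$, since each $q_n=\rappl{\labs x{p_n}}{\mset{y,y}}$ with $p_n=\rappl{\labs zz}{\mset{\cdots\rappl{\labs zz}{\mset x}\cdots}}$ ($n$ nestings) has $\degree x{p_n}=1\neq 2$ and $q_n\reducesTo s$. So $\dual{\cones{\fS_{\text{sgl}}}}$ does \emph{not} enjoy the closure you postulate and cannot serve as a candidate under your definition. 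The fix is precisely the paper's $\fS$-saturation (Definition~\ref{def:saturation}): restrict to $Q=\rappl{\labs xe}{\prom{f_0}\cdots\prom{f_n}}$ with $e,f_0,\dotsc,f_n$ themselves in the ambient finitary structure; then in the non-firing case one bounds $p\in e\inter\cone{p'}$ and the components of $\ms t$ in $f_0\inter\cone{t'_j}$ directly, while your finitely-many-preimages observation handles the firing case.
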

Notice that, in contrast with the definition of $\fC_0$, $\fS_{\text{sgl}}$ is in fact not restricted to normal forms.
Our paper extends this theorem in three directions: first, one can relax the
condition that $M$ is typed in System $\mathcal F$ and require only that $M$ is strongly
normalizable; second, the same result can be established for sets of “tests”
larger (hence more demanding) than $\fS_{\text{sgl}}$; third, the implication
can be reversed for a suitable set of tests $\fT$, \ie $M$ is strongly
normalizable iff $\taylor M\in\dual{\cones\fT}$ (and we do need $\fT$ to 
provide  more tests than just singletons: see Remark~\ref{rk:counterexample_singletons}).
In order to state our results precisely, we need to introduce a few more 
notions.

\begin{definition}\label{def:finiteness structures}
We say that a resource term $s$ is \emph{linear} whenever each bag appearing in $s$ has cardinality $1$. A set $a$ of resource terms is said \emph{linear} whenever all its elements are \emph{linear}. We say $a$ is \emph{bounded} whenever there exists a number $n\in\N$ bounding the cardinality of all bags in all terms in $a$. 
We then write
\begin{align*}
\linearStructure&
\eqdef\set{a\subseteq\resourceTerms\st \text{$a$ linear}}&
\text{and}&&
\boundedStructure&
\eqdef\set{a\subseteq\resourceTerms\st\text{$a$ bounded}}.
\end{align*}
We also denote as $\lEmbed M$ the subset of the linear resource terms in $\taylor M$. Notice that 
 $\lEmbed{M}$ is always non-empty and can be directly defined by replacing the definition of $\taylor{\appl MN}$ in Table~\ref{table:taylor_support} with: $
 \lEmbed{\appl MN}\eqdef\set{\rappl s{\mset t}\st s\in\lEmbed M, t\in\lEmbed N}$.
\end{definition}
Observe that $\fS_{\text{sgl}},\linearStructure\subseteq\boundedStructure$.

We can sum up our results Corollary~\ref{cor:realizability} and Theorem~\ref{theorem:finitary:implies:SN} as follows:
\begin{theorem}\label{th:finiteness=termination}
	Let $M\in\ndTerms$:
	\begin{itemize}
		\item 
			If $M$ is strongly normalizing, then $\taylor
			M\in\dual{\cones\fT}$ as soon as $\fT\subseteq\boundedStructure$.
		\item 
			If $\taylor M\in\dual{\cones\fT}$ with
			$\linearStructure\subseteq\fT$, then $M$ is strongly normalizing.
	\end{itemize}
\end{theorem}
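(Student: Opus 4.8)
\textbf{Proof plan for Theorem~\ref{th:finiteness=termination}.}

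The statement bundles two independent implications, so I would attack them separately, each via one of the two ``strategies'' the paper announces (Corollary~\ref{cor:realizability} for the forward direction, Theorem~\ref{theorem:finitary:implies:SN} for the backward one). Since the excerpt stops before those are proved, I will sketch the natural route for each.

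For the first implication, the plan is to factor through an intersection type system. Assume $M$ is strongly normalizing; by Theorem~\ref{theorem:sn:typable} (the ``SN $\Rightarrow$ typable in $\mathcal D_+$'' arrow of Figure~\ref{fig:main_results}) $M$ has a $\mathcal D_+$ derivation. The heart of the argument is then a realizability/reducibility model in the style of Ehrhard~\cite{Ehrhard10lics}: interpret each type $\sigma$ as a finiteness structure $\llbracket\sigma\rrbracket$ on $\resourceTerms$ living inside $\dual{\cones\boundedStructure}$, show the interpretation is stable under the type constructors (arrow, intersection, the exponential/bag construction), and prove an adequacy lemma stating that if $M$ is typable with type $\sigma$ in a context $\Gamma$ then $\taylor M$ (suitably substituted according to $\Gamma$) lies in $\llbracket\sigma\rrbracket$. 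The key combinatorial fact making this work is that for a strongly normalizing $M$ the resource terms in $\taylor M$ that reduce to (or anti-reduce from) a fixed term $s$ have uniformly bounded bag-cardinalities — which is exactly why one needs $\fT\subseteq\boundedStructure$ and only obtains membership in $\dual{\cones\fT}$ rather than in the smaller $\dual{\cones{\fS_{\text{sgl}}}}$. Once adequacy holds, one extracts $\taylor M\in\dual{\cones\boundedStructure}\subseteq\dual{\cones\fT}$ for any $\fT\subseteq\boundedStructure$, using the antimonotonicity of $(-)^\perp$ noted after Definition~\ref{def:finiteness}.

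For the second implication I would argue contrapositively: suppose $M$ is not strongly normalizing, exhibiting an infinite reduction sequence $M=M_0\betaRed M_1\betaRed M_2\betaRed\cdots$. The goal is to produce a single linear test $\{s\}\in\linearStructure\subseteq\fT$ that $\taylor M$ fails, i.e.\ with $\taylor M\cap\cone s$ infinite. The mechanism is the simulation of $\betaRed$ by $\dbetaRed$ on Taylor supports: each $\beta$-step $M_i\betaRed M_{i+1}$ lifts to the statement that every $t'\in\taylor{M_{i+1}}$ arises as $t'\in a$ for some $t\in\taylor{M_i}$ with $t\dbetaRed^* a$, hence $t\reducesTo t'$, hence $t\in\cone{t'}$ (the paper essentially records this via Definition~\ref{def:reduce} and the linear embeddings $\lEmbed{(-)}$). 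Picking a linear witness $s_i\in\lEmbed{M_i}$ at each stage — linearity is what keeps the test inside $\linearStructure$ and is available because $\lEmbed{M_i}$ is always non-empty — and tracking antecedents back along the sequence, one gets for each $i$ a term $u_i\in\taylor M$ with $u_i\ge s_i$. If the $s_i$ can be chosen so that infinitely many of the $u_i$ are distinct, then $\taylor M\cap\cone{\bigcup_i\{s_i\}}$ is infinite; more carefully, one fixes a single suitable $s$ (e.g.\ a linear term tracking the ``growing'' part of the divergence, as in the $\mathbf\Omega$ and $\mathbf{\Omega_3}$ discussions of Examples~\ref{ex:taylor_reduction_omega} and Remark~\ref{rk:counterexample_singletons}) whose anticone meets $\taylor M$ infinitely often, contradicting $\taylor M\in\dual{\cones\fT}$.

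The main obstacle is this last point: a naive attempt would want $\{s_i\}$ a singleton test for a \emph{fixed} $s$, but an infinite reduction need not revisit the same resource term, which is precisely the subtlety flagged in Remark~\ref{rk:counterexample_singletons} ($\mathbf{\Omega_3}$ shows singletons do not suffice). The resolution is to use that $\linearStructure$ contains not just singletons but \emph{all} linear sets, so one is allowed an infinite linear test $b=\{s_0,s_1,s_2,\dots\}$; the real work is then a size/height analysis showing that, along a diverging $\beta$-sequence, the linear resource terms approximating $M_i$ have anticones that capture infinitely many distinct elements of $\taylor M$ — i.e.\ that $\taylor M \cap \cone b$ is infinite. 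Establishing this quantitative bound (essentially: bigger $M_i$ forces larger, hence eventually new, resource approximants in $\taylor M$) is where the genuine content of Theorem~\ref{theorem:finitary:implies:SN} lies; everything else is bookkeeping about the simulation and about duality.
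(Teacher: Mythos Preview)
Your plan for the first bullet is essentially the paper's: strong normalization gives a $\mathcal D_+$ derivation (Theorem~\ref{theorem:sn:typable}), and a realizability argument (Theorem~\ref{theorem:realizability}) with $\fS=\dual{\cones\boundedStructure}$ --- once one checks this $\fS$ is ``adapted'' (Lemma~\ref{lemma:condition_adaptedness}, using that $\boundedStructure$ is dispersed, hereditary and expandable) --- yields $\taylor M\in\dual{\cones\boundedStructure}$, whence the result for any $\fT\subseteq\boundedStructure$ by antimonotonicity of duality. One minor slip: since $\fS_{\text{sgl}}\subset\boundedStructure$, the structure $\dual{\cones{\fS_{\text{sgl}}}}$ is \emph{larger}, not smaller, than $\dual{\cones\boundedStructure}$; the paper does recover Ehrhard's conclusion as a special case.

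For the second bullet your overall shape is right (contrapositive, lift $\beta$-steps to $\reducesTo$ via Lemma~\ref{lemma:resource:antireduction}, collect linear witnesses), but you locate the genuine difficulty in the wrong case. Your argument ``bigger $M_i$ forces larger, hence eventually new, resource approximants in $\taylor M$'' is exactly how the paper disposes of the case where $(\height{M_i})_i$ is \emph{unbounded}: pick $t_i\in\lEmbed{M_i}$ with $\height{t_i}=\height{M_i}$, pull back to $s_i\in\taylor M$ with $s_i\reducesTo t_i$, and observe $\size{s_i}\ge\size{t_i}\ge\height{t_i}$ is unbounded, so the $s_i$ are infinitely many; the infinite linear test $a=\bigcup_i\lEmbed{M_i}$ witnesses $\taylor M\not\perp\cone a$. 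This is the easy half.

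The substantive case --- and the one your sketch does not cover --- is when $(\height{M_i})_i$ stays \emph{bounded}: think $\mathbf\Omega$, or more delicately $\appl{\mathbf\Theta}{(\labs x{x+y})}$, where the $M_i$ need not literally repeat but their Taylor supports do. Here your size argument yields nothing: the linear witnesses have bounded size, and the pulled-back $s_i$ might all coincide. The paper handles this by introducing a \emph{partial reduction} $\partialBetaRed$ (Definition~\ref{def:partial_red}) that allows dropping summands, reducing to $\partialBetaRed$-sequences of simple terms, and then proves (Lemma~\ref{lemma:infinite:bounded:reduction}, by induction on the height bound together with a pigeonhole on the finitely many possible Taylor supports and a case split on whether infinitely many steps are at top level) that a \emph{single} linear term $t$ exists with an infinite strictly $\reducesTo$-increasing chain inside $\taylor{M_0}\cap\cone t$. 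So in the bounded case a singleton test suffices; it is only the unbounded case that needs an infinite linear set. Your sketch conflates the two and omits precisely the machinery (partial reduction, the bounded-height pigeonhole lemma) that carries the looping case.
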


\section{Strongly Normalizing Terms Are $\mathcal D_+$ Typable}\label{sect:D+}


\begin{table}[t]
{\small
\begin{subtable}[t]{\linewidth}
	\[
		A,B\eqdef X \mid A\to B\mid A\cap B
	\]
	\caption{Grammar of types, $X$ varying over a denumerable set of propositional variables}
\end{subtable}

\hrulefill

\medskip
\begin{subtable}[t]{\linewidth}
	\centering 
	\begin{prooftree}
		\Hypo{\text{for } i\in\{1,2\}}
		\Infer{1}{A_1\inter A_2\subtype A_i}
	\end{prooftree}
	\qquad
	\begin{prooftree}
		\Hypo{A'\subtype A}
		\Hypo{B\subtype B'}
	 	 \Infer{2}{A\to B\subtype A'\to B'}
	\end{prooftree}
	\qquad
	\begin{prooftree}
		\Infer{0}{(A\to B)\inter(A\to C)\subtype A\to (B\inter C)}
	\end{prooftree}
	\medskip
	\caption{Rules generating the subtyping relation $\subtype$}\label{table:rules_subtyping}
\end{subtable}

\hrulefill

\medskip
\begin{subtable}[t]{\linewidth}
\centering
	\setkeys{ebproof}{center=false}
	\begin{prooftree}
		\VarRule{\Gamma,x:A\vdash x:A}
	\end{prooftree}
	\qquad
	\begin{prooftree}
		\Hypo{\Gamma,x:A\vdash M:B}
		\AbsRule{\Gamma\vdash \labs xM:A\to B}
	\end{prooftree}

	\medskip	
	\begin{prooftree}
		\Hypo{\Gamma\vdash M:A\to B}
		\Hypo{\Gamma\vdash N:A}
		\AppRule{\Gamma\vdash \appl MN: B}
	\end{prooftree}
	\qquad
	\begin{prooftree}
		\Hypo{\Gamma\vdash M:A}
		\Hypo{\Gamma\vdash M:B}
		\IntRule{\Gamma\vdash M: A\inter B}
	\end{prooftree}
	
	\medskip
	\begin{prooftree}
		\Hypo{\Gamma\vdash M:A}
		\Hypo{\Gamma\vdash N:A}
		\PlusRule{\Gamma\vdash M+N:A}
	\end{prooftree}
	\qquad
	\begin{prooftree}
		\Hypo{\Gamma\vdash M:A}
		\Hypo{A\subtype A'}
		\SubRule{\Gamma\vdash M:A'}
	\end{prooftree}

	\medskip
	\caption{Derivation rules}
\end{subtable}
}
\caption{the intersection type assignment system $\mathcal D_+$ for $\ndTerms$}\label{fig:D+}
\end{table}
Intersection types are well known, as well as their relation with normalizability. We refer to~\cite{interbcd} for the original system with subtyping characterizing the set of strongly normalizing $\lambda$-terms, and~\cite{Bakel92} and~\cite{lcalculen} for simpler systems. However, as far as we know, the literature about intersection types for  non-deterministic $\lambda$-calculi is less well established and in fact we could find no previous characterization of strong normalization in a non-deterministic setting. Hence, we give in Table~\ref{fig:D+} a variant of Krivine's system $\mathcal D$~\cite{lcalculen}, characterizing the set of strongly normalizing terms in $\ndTerms$. In this section, we only prove that strongly normalizing terms are typable (Theorem~\ref{theorem:sn:typable}): the reverse implication follows from the rest of the paper (see Figure~\ref{fig:main_results}). These techniques are standard.
%
\begin{remark}
Krivine's original System $\mathcal D$ does not have $\SubRuleName$ and $\PlusRuleName$, but it has the two usual elimination rules for intersection (here derivable). 
The rule $\PlusRuleName$ is necessary to account for non-determinism, however adding just $\PlusRuleName$ to System $\mathcal D$ is misbehaving. We can find terms $M$ and $N$, and a context $\Gamma$ such that $\appl MN$ is typable
in System $\mathcal D$ with $\PlusRuleName$ under the context $\Gamma$ but $M$ is not: take 
$\Gamma=x:A\to B\inter B',y:A\to B\inter B'',z:A$,
observe that $\appl{x+y} z=\appl xz+\appl yz$,
and thus $\Gamma\vdash \appl {x+y}z:B$ but
$x+y$ is not typable in $\Gamma$.
This is the reason why we introduce subtyping.
\end{remark}

\begin{theorem}
	\label{theorem:sn:typable}
	For all $M\in\ndTerms$, if $M$ is strongly normalizable, then 
	there exists a derivable judgement $\Gamma\vdash M:A$ in system $\dplus$.
\end{theorem}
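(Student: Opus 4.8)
The plan is to follow the standard reducibility/saturated-sets argument adapted to the non-deterministic calculus, proving the slightly stronger statement that includes a closure property making the induction on reductions go through. Concretely, I would first establish the usual structural facts about the type system: weakening (adding hypotheses to $\Gamma$), a substitution lemma (if $\Gamma, x:A \vdash M:B$ and $\Gamma \vdash N:A$ then $\Gamma \vdash \subst MxN : B$), and, crucially for this system, a \emph{subject expansion} lemma for the strongly normalizing setting: if $\Gamma \vdash \subst MxN : B$ is derivable, $N$ is strongly normalizable, and $x$ occurs in $M$ (or, if it does not, one additionally assumes $N$ typable so that the redex cannot erase an untypable term), then $\appl{\labs xM}{N}$ is typable in $\Gamma$. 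The need to handle the erasing case, together with the interaction with $\PlusRuleName$, is exactly what the remark before the theorem is flagging, and it is why $\SubRuleName$ is present: when $M = M_1 + M_2$ one must combine derivations of the two summands at a common type, using subtyping to reconcile the arrow types coming from each branch.

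Next I would set up the induction. The cleanest route is induction on the pair $(\text{number of redexes / longest reduction length of } M,\ \text{size of } M)$, i.e. a lexicographic induction on $(\eta(M), |M|)$ where $\eta(M)$ is the (finite, since $M$ is SN) maximal length of a $\beta$-reduction sequence from $M$. I distinguish cases on the shape of $M$ modulo the structural identities. If $M$ is a variable $x$, type it with $\VarRuleName$ in a context assigning $x$ any type. If $M = \labs xN$, then $N$ is SN with $\eta(N)\le\eta(M)$ and smaller size, so by induction $\Gamma, x:A \vdash N:B$ for some $\Gamma,A,B$ (after weakening to make sure $x$ is fresh for $\Gamma$), and $\AbsRuleName$ concludes. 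If $M = M_1 + M_2$, each $M_i$ is SN (any reduction of a summand is a reduction of the sum) with no larger measure, so induction gives $\Gamma_i \vdash M_i : A_i$; I then need to bring these to a common context and common type — weaken the contexts to a common $\Gamma$, and for the types use that a derivation of $M_i : A_i$ can be turned, via $\SubRuleName$ and the arrow/intersection subtyping rules, into a derivation at a type that also types $M_{1-i}$; here a small auxiliary lemma that any two derivable types for SN terms admit such a common upper bound adapted to both is the technical heart, and $\PlusRuleName$ then applies.

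The remaining and genuinely substantial case is $M = \appl{M_1}{M_2}$, which I split by the head structure of $M_1$. The neutral subcase is when $M_1$'s head is a variable: then $M_1$ and $M_2$ are proper subterms, both SN with smaller size and no larger $\eta$, so by induction they are typable, and after merging contexts I need $M_1$ to receive an arrow type $A\to B$ whose domain $A$ is a type of $M_2$ — this is arranged by first typing $M_1$ at some $C$, $M_2$ at some $A$, and using $\SubRuleName$/$\AppRuleName$ together with the fact that for a head-variable term one can always raise its type to an arrow $A\to B$ with prescribed domain (choosing the variable's type in the context appropriately). The redex subcase, $M_1 = \labs x{M_1'}$ so that $M$ is a $\beta$-redex, is where subject expansion is used: the contractum $\appl{\appl{\labs x{M_1'}}{M_2}}{}$ — more precisely $M' = \subst{M_1'}{x}{M_2}$ applied to nothing if $M_1$ is an abstraction, i.e. the term $\subst{M_1'}{x}{M_2}$ viewed in context — is SN with $\eta(M') < \eta(M)$, hence typable by induction; since $M_2$ is also SN and thus (again by induction) typable, the expansion lemma yields a typing of $M$. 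The sum-in-function-position shapes ($M_1 = N_1 + N_2$) reduce to the $+$ case by the identity $\appl{N_1+N_2}{M_2} = \appl{N_1}{M_2}+\appl{N_2}{M_2}$. The main obstacle throughout is the bookkeeping forced by non-determinism and the absence of a genuine intersection-introduction on elimination: every place where two derivations must be combined (the $+$ case, subject expansion through a sum, context merging) has to be mediated by $\SubRuleName$, so the delicate step is proving the common-supertype lemma that makes all these merges simultaneously possible while keeping the arrow structure needed for subsequent applications.
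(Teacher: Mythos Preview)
Your skeleton is right — same measure $(\eta(M),\lvert M\rvert)$, same use of subject expansion for head redexes — but there is a real gap in the way you combine typings, and the paper fills it by strengthening the induction hypothesis in a way you do not. The paper proves, for each strongly normalizable $M$, not merely that $M$ is typable at \emph{some} type, but that there is a bound $n_M$ such that for \emph{every} target type $B$ and every $n\ge n_M$ one can derive $\Gamma\vdash M:A_1\to\cdots\to A_n\to B$ for suitable $\Gamma$ and $A_i$. This ``any long-enough arrow into any $B$'' form is what makes the merges trivial: for $M_1+M_2$ pick $n\ge\max(n_{M_1},n_{M_2})$ and the same $B$, then apply $\PlusRuleName$ directly; for an application with head variable, the function part can be typed at $A\to C$ with $A$ chosen to be a type already derived for the argument; for a head redex $\appl{\appl{\labs xN}P}{M_1\cdots M_q}$ the contractum is typable at the desired arrow shape and subject expansion lifts this back.

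By contrast, your inductive hypothesis only yields \emph{some} $\Gamma_i$ and \emph{some} $A_i$ for each piece, and you then appeal to an unstated ``common-supertype'' lemma to align them. That lemma, as phrased, does not hold in $\mathcal D_+$: two arbitrary types need not have a common supertype (e.g.\ a base type $X$ and an arrow $X\to X$), and subtyping alone cannot turn an arbitrary derivation at $C$ into one at an arrow $A\to B$ with prescribed domain $A$. Your fallback — ``choosing the variable's type in the context appropriately'' — amounts to re-deriving the typing of the head-variable subterm with a different context, but then you must also re-type the arguments (which may share that variable) in a compatible way, and you are back to needing the flexibility the strengthened hypothesis grants up front. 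A secondary issue: your redex case only treats $M=\appl{\labs x{M_1'}}{M_2}$, whereas the head redex can sit under a spine, i.e.\ $M=\appl{\appl{\labs xN}P}{M_1\cdots M_q}$; the paper's expansion lemma is stated for the full spine. The fix for all of this is exactly the strengthened statement the paper proves.
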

\begin{proof}[Sketch]
	For $M$ a strongly normalizable term, let
	$\redlen M$ be the maximum length of a reduction from $M$, and $\size M$ the number of symbols occurring in $M$.
	By well-founded induction on the pair $(\redlen M,\size M)$ we prove that
	there exists $n_M\in\N$ such that for all type $B$ and all $n\ge n_M$, there is a context $\Gamma$ and a sequence
	$(A_1,\dotsc,A_n)$ of types such that $\Gamma\vdash M:A_1\to\cdots\to A_n\to B$.
	
	The proof splits depending on the structure of $M$. In case $M=M_1+M_2$, we apply the induction hypothesis on both $M_1$ and $M_2$ and conclude by rule $\SubRuleName$ and a contravariance property: $\Gamma\vdash M:A$ whenever $\Gamma'\vdash M:A$ and $\Gamma\subtype\Gamma'$.



In case of head-redexes, i.e.\ $M=\appl {\appl{\labs x N}P} {M_1\cdots M_q}$,  we apply the induction hypothesis on $M'=\appl{\subst NxP}{M_1\cdots M_q}$ and on $P$. Then, we conclude via a subject expansion lemma stating that: $\Gamma\vdash \appl{\labs x N}{P\,M_1\cdots M_n}:A$, whenever $\Gamma\vdash \appl{\subst NxP}{M_1\cdots M_n}:A$ and there exists $B$ such that $\Gamma\vdash P:B$. 

The other cases are similar to the first one. 
\qed
\end{proof}

\section{$\mathcal D_+$ Typable Terms Are Finitary}\label{sect:realizers}
%

This section proves Corollary~\ref{cor:realizability}, giving sufficient conditions (to be dispersed, hereditary and expandable, see  resp.\ Definition~\ref{def:dispersion}, \ref{def:hereditary} and \ref{def:expandable}) over a structure $\fT$ in order to have all cones $\cone a$ for $a\in\fT$ dual to the Taylor expansion of any strongly normalizing non-deterministic $\lambda$-term. 

It is easy to check that these conditions are satisfied by the structures $\boundedStructure$ of bounded sets and $\linearStructure$ of sets of linear terms (Definition~\ref{def:finiteness structures}). Moreover, as an immediate corollary one gets also that any subset $\fT\subseteq\boundedStructure$ is also such that $\taylor{M}\in\dual{\cones{\fT}}$ for any strongly normalizable term $M\in\ndTerms$, so getting the first Item of Theorem~\ref{th:finiteness=termination}.


Thanks to previous Theorem~\ref{theorem:sn:typable} we can prove Corollary~\ref{cor:realizability} by a realizability technique on the intersection type system $\mathcal D_+$. For a fixed structure $\fS$, we associate  with any type $A$ a realizer $\realize[\fS] A$ (Definition~\ref{def:realizers}). In the case $\fS$ is adapted (Definition~\ref{def:adapted}), we can prove that $\realize[\fS] A$ contains the Taylor expansion of any term of type $A$ and that it is contained in $\fS$ (Theorem~\ref{theorem:realizability}). These definitions and theorem are adapted from Krivine's proof for System $\mathcal D$~\cite{lcalculen}. 

The crucial point is then to find structures $\fS$ which are \emph{adapted}: here is our contribution. The structures that we study have the shape $\dual{\cones{\fT}}$, so that we are speaking of the interaction with cones of anti-reducts of tests in a structure $\fT$. Intuitively, $\fT$ is a set of tests that can be passed by any term typable in System $\mathcal D_+$ (hence by any strongly normalizing term). We prove (Lemma~\ref{lemma:condition_adaptedness}) that for a structure $\fT$,  being dispersed, hereditary and expandable is sufficient to guarantee that the dual structure $\dual{\cones\fT}$ is adapted, then achieving Corollary~\ref{cor:realizability}.

\begin{definition}[Functional]\label{def:functional}
Given two structures $\fS,\fS'\subseteq\powerset{\resourceTerms}$, we define the structure $\fS\to\fS'\eqdef\set{f\subseteq\resourceTerms\st \forall a\in\fS, \rappl f{\prom a}\in\fS'}$.
\end{definition}

\begin{definition}[Saturation]\label{def:saturation}
	Let $\fS, \fS'\subseteq\powerset{\resourceTerms}$. We say $\fS'$ is \emph{$\fS$-saturated} if $\forall e,f_0,\dots,f_n\in\fS$, $\rappl{\dsubst{e}x{\prom{f_0}}}{\prom{f_1}\dots\prom{f_n}}\in\fS'$ implies  $\rappl{\lambda x.e}{\prom{f_0}\,\prom{f_1}\dots\prom{f_n}}\in\fS'$.
\end{definition}

\begin{definition}[Adaptedness]\label{def:adapted} 
For all $\fS\subseteq\powerset{\resourceTerms}$ we set $\neutralSets{\fS} \eqdef \set{\set{x}\st x\in\variables}\mathbin\cup\set{\rappl{x} \prom{a_1}\cdots\prom{a_n}\st x\in\variables,\ n>0 \text{ and } \forall i,\ a_i\in\fS}$
and say $\fS$ is \emph{adapted} if:
\begin{enumerate}
\item $\fS$ is $\fS$-saturated;\label{cond:adaptedness_saturate}
\item $\neutralSets\fS\subset (\fS\to\neutralSets\fS)\subset(\neutralSets\fS\to\fS)\subset\fS$;\label{cond:adaptedness_inclusions}
\item $\fS$ is closed under finite unions: $\forall b,b'\in \fS$, $b\cup b'\in\fS$.\label{cond:adaptedness_unions}
\end{enumerate}
\end{definition}

\begin{definition}[Realizers]\label{def:realizers}
Let $\fS\subseteq\powerset{\resourceTerms}$. To each type $A$ of System $\mathcal D_+$, we associate a structure
$\realize[\fS] A$ defined inductively ($X$ being a propositional variable):
\begin{align*}
	\realize[\fS] X&\eqdef\fS,&
	\realize[\fS] {A\to B}&\eqdef\realize[\fS]{A}\to\realize[\fS]{B},&
	\realize[\fS] {A\inter B}&\eqdef\realize[\fS]{A}\cap\realize[\fS]{B}.
\end{align*}
\end{definition}
\begin{lemma}\label{lemma:relizers_saturated}
Let $\fS$ be an adapted structure, then for every type $A$, $\realize[\fS] A$ is $\fS$-saturated, closed under finite unions and $\neutralSets\fS\subseteq\realize[\fS] A\subseteq\fS$.
\end{lemma}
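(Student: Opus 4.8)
The plan is to proceed by structural induction on the type $A$, using adaptedness of $\fS$ both as the base case and as the engine that makes the arrow case go through. Concretely I want to prove simultaneously the three claimed properties of $\realize[\fS]A$: that it is $\fS$-saturated, that it is closed under finite unions, and that it is sandwiched $\neutralSets\fS\subseteq\realize[\fS]A\subseteq\fS$. Doing them together is important, because in the arrow step the sandwiching of $\realize[\fS]{A\to B}$ will rely on the sandwiching of both $\realize[\fS]A$ and $\realize[\fS]B$, together with the chain of inclusions $\neutralSets\fS\subset(\fS\to\neutralSets\fS)\subset(\neutralSets\fS\to\fS)\subset\fS$ from Definition~\ref{def:adapted}(\ref{cond:adaptedness_inclusions}), and dually the saturation of $\realize[\fS]{A\to B}$ will need to unfold the definition of $\to$ down to $\realize[\fS]B$.

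For the \emph{base case} $A=X$: here $\realize[\fS]X=\fS$, and all three properties are literally conditions~(\ref{cond:adaptedness_saturate})--(\ref{cond:adaptedness_unions}) of adaptedness, with $\neutralSets\fS\subseteq\fS$ coming from the outer inclusion in~(\ref{cond:adaptedness_inclusions}) and $\fS\subseteq\fS$ trivial. For the \emph{intersection case} $A=B\cap C$: $\realize[\fS]{B\cap C}=\realize[\fS]B\cap\realize[\fS]C$, and one checks that the intersection of two $\fS$-saturated sets is $\fS$-saturated (the saturation premise/conclusion are statements about the same elements belonging to the set, so they pass through a binary intersection), likewise for closure under finite unions; the sandwich $\neutralSets\fS\subseteq\realize[\fS]B\cap\realize[\fS]C\subseteq\fS$ follows from the inductive sandwiches for $B$ and $C$. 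For the \emph{arrow case} $A=B\to C$: I expand $\realize[\fS]{B\to C}=\realize[\fS]B\to\realize[\fS]C$. Using the IH sandwich $\neutralSets\fS\subseteq\realize[\fS]B\subseteq\fS$ and $\neutralSets\fS\subseteq\realize[\fS]C\subseteq\fS$ together with monotonicity of the functional construction in a suitable variance, one gets $(\fS\to\neutralSets\fS)\subseteq(\realize[\fS]B\to\realize[\fS]C)\subseteq(\neutralSets\fS\to\fS)$, and then the inner inclusions of~(\ref{cond:adaptedness_inclusions}) give $\neutralSets\fS\subset(\fS\to\neutralSets\fS)$ and $(\neutralSets\fS\to\fS)\subset\fS$, yielding the sandwich for $B\to C$. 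Closure under finite unions of $\realize[\fS]B\to\realize[\fS]C$ reduces, after unfolding $\to$, to closure under finite unions of $\realize[\fS]C$ (since $\rappl{(f\cup f')}{\prom a}=\rappl f{\prom a}\cup\rappl{f'}{\prom a}$ by $\B$-linearity of the resource constructors), which is the IH for $C$.

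The \textbf{main obstacle} is the $\fS$-saturation of the arrow type $\realize[\fS]{B\to C}$: given $e,f_0,\dots,f_n\in\fS$ with $\rappl{\dsubst ex{\prom{f_0}}}{\prom{f_1}\cdots\prom{f_n}}\in\realize[\fS]{B\to C}$, I must show $\rappl{\lambda x.e}{\prom{f_0}\,\prom{f_1}\cdots\prom{f_n}}\in\realize[\fS]{B\to C}$, i.e.\ that for every $a\in\realize[\fS]B$, applying this last term to $\prom a$ lands in $\realize[\fS]C$. The difficulty is that feeding $\prom a$ as one more argument does not directly match the shape in Definition~\ref{def:saturation}; the clean way is to show that the resource-reduction commutations let $\rappl{(\rappl{\lambda x.e}{\prom{f_0}\cdots\prom{f_n}})}{\prom a}$ be obtained as the $\lambda x.e$-headed term with one extra argument bag $\prom a$ appended, while $\rappl{(\rappl{\dsubst ex{\prom{f_0}}}{\prom{f_1}\cdots\prom{f_n}})}{\prom a}$ — which lies in $\realize[\fS]C$ by hypothesis, since $a\in\realize[\fS]B\subseteq\fS$ so $\prom a$ is an admissible argument — is exactly the corresponding reduct; invoking $\fS$-saturation of $\realize[\fS]C$ (IH for $C$) with the list $f_0,\dots,f_n,$ and the set-of-bags $\prom a$ playing the role of $\prom{f_{n+1}}$ then closes the case. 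One has to be slightly careful here that Definition~\ref{def:saturation} quantifies over $f_i\in\fS$ while we want $f_{n+1}=a\in\realize[\fS]C$'s argument coming from $\realize[\fS]B\subseteq\fS$, which is fine precisely because of the sandwich property proved in the same induction; this is why the three properties must be carried together.
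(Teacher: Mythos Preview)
Your proposal is correct and follows the standard Krivine-style induction on types that the paper clearly has in mind (the lemma is stated without proof in the paper, being a routine adaptation of the analogous fact for System~$\mathcal D$). The simultaneous induction carrying all three properties together, the variance argument for the sandwich in the arrow case via condition~(\ref{cond:adaptedness_inclusions}), and the reduction of arrow-type saturation to saturation of $\realize[\fS]C$ by appending $a\in\realize[\fS]B\subseteq\fS$ as an extra argument $f_{n+1}$ are exactly the expected steps.
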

\begin{lemma}[Adequacy]\label{lemma:realizability}
If $\fS\subseteq\powerset{\resourceTerms}$ is adapted, $x_1:A_1,\dotsc,x_n:A_n\vdash M:B$ and for all $1\le i\le n$, $a_i\in\realize[\fS]{A_i}$, then
	$\dsubst{\taylor M}{x_1,\dotsc,x_n}{\prom{a_1},\dotsc,\prom{a_n}}\in\realize[\fS]{B}$.
\end{lemma}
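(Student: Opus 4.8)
The plan is to prove the Adequacy lemma by induction on the derivation of the judgement $x_1:A_1,\dotsc,x_n:A_n\vdash M:B$, following Krivine's realizability method, with the main work being the case analysis on the last rule applied. Throughout I will write $\vec x = x_1,\dotsc,x_n$ and $\vec a = a_1,\dotsc,a_n$ with $a_i\in\realize[\fS]{A_i}$, and abbreviate the simultaneous differential substitution $\dsubst{\taylor M}{\vec x}{\prom{\vec a}}$. I will use freely Lemma~\ref{lemma:relizers_saturated}: every $\realize[\fS]{A}$ is $\fS$-saturated, closed under finite unions, and satisfies $\neutralSets\fS\subseteq\realize[\fS]{A}\subseteq\fS$. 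A preliminary observation I would record first is that differential substitution commutes with the resource-calculus constructors and distributes over the set operations underlying $\taylor{\;}$, so that $\dsubst{\taylor{(\labs y M)}}{\vec x}{\prom{\vec a}} = \labs y{\dsubst{\taylor M}{\vec x}{\prom{\vec a}}}$, $\dsubst{\taylor{(\appl M N)}}{\vec x}{\prom{\vec a}} = \rappl{\dsubst{\taylor M}{\vec x}{\prom{\vec a}}}{\prom{\dsubst{\taylor N}{\vec x}{\prom{\vec a}}}}$ (using $\taylor{\appl MN}=\rappl{\taylor M}{\prom{\taylor N}}$ from Table~\ref{table:taylor_support}), and $\dsubst{\taylor{(M+N)}}{\vec x}{\prom{\vec a}} = \dsubst{\taylor M}{\vec x}{\prom{\vec a}}\cup\dsubst{\taylor N}{\vec x}{\prom{\vec a}}$.

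Then I would go through the six rules of $\dplus$. For $\VarRuleName$, $M=x_i$ and $\dsubst{\taylor{x_i}}{\vec x}{\prom{\vec a}}=\dsubst{x_i}{\vec x}{\prom{\vec a}}$ collects, by definition of differential substitution, exactly the terms $[s]$ with $s\in a_i$ (and the other substitutions act on bags that must have cardinality matching the number of occurrences, which is $1$ for $x_i$ and $0$ elsewhere); one checks this lies in $\realize[\fS]{A_i}$ — in fact for the variable case it is essentially $a_i$ up to the trivial wrapping, so this uses only $a_i\in\realize[\fS]{A_i}$. For $\AbsRuleName$, the premise gives $\dsubst{\taylor N}{\vec x,y}{\prom{\vec a},\prom a}\in\realize[\fS]{B}$ for every $a\in\realize[\fS]{A}$; by the commutation above $\dsubst{\taylor{(\labs y N)}}{\vec x}{\prom{\vec a}}=\labs y{\dsubst{\taylor N}{\vec x}{\prom{\vec a}}}$, and I must show this is in $\realize[\fS]{A}\to\realize[\fS]{B}$, i.e. that applying it to $\prom a$ lands in $\realize[\fS]{B}$ for all $a\in\realize[\fS]{A}$; this is precisely where $\fS$-saturation of $\realize[\fS]{B}$ is used, since $\rappl{\labs y e}{\prom a}$ reduces to $\dsubst e y{\prom a}$ and the saturation clause (with $n=0$) lets us pull the former into $\realize[\fS]{B}$ from the latter. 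For $\AppRuleName$, the premises give $\dsubst{\taylor M}{\vec x}{\prom{\vec a}}\in\realize[\fS]{A}\to\realize[\fS]{B}$ and $\dsubst{\taylor N}{\vec x}{\prom{\vec a}}\in\realize[\fS]{A}$, whence $\rappl{\dsubst{\taylor M}{\vec x}{\prom{\vec a}}}{\prom{\dsubst{\taylor N}{\vec x}{\prom{\vec a}}}}\in\realize[\fS]{B}$ directly by Definition~\ref{def:functional}, and this equals $\dsubst{\taylor{(\appl MN)}}{\vec x}{\prom{\vec a}}$ by the commutation. For $\IntRuleName$, the two premises give membership in $\realize[\fS]{A}$ and in $\realize[\fS]{B}$, and $\realize[\fS]{A\inter B}$ is their intersection, so nothing to do. For $\PlusRuleName$, both premises give membership of $\dsubst{\taylor M}{\vec x}{\prom{\vec a}}$ and $\dsubst{\taylor N}{\vec x}{\prom{\vec a}}$ in $\realize[\fS]{A}$, and their union — which by the commutation is $\dsubst{\taylor{(M+N)}}{\vec x}{\prom{\vec a}}$ — is in $\realize[\fS]{A}$ by closure under finite unions.

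The remaining rule $\SubRuleName$ requires an auxiliary monotonicity statement: if $A\subtype A'$ then $\realize[\fS]{A}\subseteq\realize[\fS]{A'}$. I would prove this by induction on the derivation of $A\subtype A'$ using the three generating rules of Table~\ref{table:rules_subtyping}: for $A_1\inter A_2\subtype A_i$ it is immediate since $\realize[\fS]{A_1\inter A_2}=\realize[\fS]{A_1}\cap\realize[\fS]{A_2}\subseteq\realize[\fS]{A_i}$; for the arrow rule, from $A'\subtype A$ and $B\subtype B'$ and the induction hypotheses $\realize[\fS]{A'}\subseteq\realize[\fS]{A}$, $\realize[\fS]{B}\subseteq\realize[\fS]{B'}$, one gets $\realize[\fS]{A}\to\realize[\fS]{B}\subseteq\realize[\fS]{A'}\to\realize[\fS]{B'}$ by the usual contravariant–covariant argument at the level of Definition~\ref{def:functional} (if $f\in\realize[\fS]{A}\to\realize[\fS]{B}$ and $a\in\realize[\fS]{A'}\subseteq\realize[\fS]{A}$ then $\rappl f{\prom a}\in\realize[\fS]{B}\subseteq\realize[\fS]{B'}$); and for $(A\to B)\inter(A\to C)\subtype A\to(B\inter C)$ one checks that if $f$ belongs to both $\realize[\fS]{A}\to\realize[\fS]{B}$ and $\realize[\fS]{A}\to\realize[\fS]{C}$ then for $a\in\realize[\fS]{A}$ we have $\rappl f{\prom a}\in\realize[\fS]{B}\cap\realize[\fS]{C}=\realize[\fS]{B\inter C}$. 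With this in hand the $\SubRuleName$ case is trivial. I expect the main obstacle to be bookkeeping rather than conceptual: getting the precise behaviour of simultaneous differential substitution right on the set-level Taylor support (in particular that it genuinely commutes with the bag-promotion operation $\prom{(\;)}$ appearing in $\taylor{\appl MN}$, and that the saturation clauses of Definition~\ref{def:saturation} apply with the right number of trailing bags), and handling the linearity/degree constraints that make differential substitution partial. The use of the hypotheses is standard: $\AbsRuleName$ consumes $\fS$-saturation, $\PlusRuleName$ consumes closure under finite unions, and $\AppRuleName$/$\IntRuleName$/$\VarRuleName$/$\SubRuleName$ only use the inductive structure of $\realize[\fS]{\;}$ together with the subtyping monotonicity lemma.
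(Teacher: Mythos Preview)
Your proposal is correct and follows essentially the same approach as the paper: structural induction on the typing derivation, with $\fS$-saturation of realizers handling $\AbsRuleName$, closure under finite unions handling $\PlusRuleName$, and the auxiliary monotonicity lemma $A\subtype B\Rightarrow\realize[\fS]{A}\subseteq\realize[\fS]{B}$ handling $\SubRuleName$. One small bookkeeping point you anticipate but do not spell out: in the $\AbsRuleName$ case the saturation clause of Definition~\ref{def:saturation} requires $e\in\fS$, which you obtain by instantiating the induction hypothesis at $\set y\in\neutralSets\fS\subseteq\realize[\fS]{A}$, so that $e=\dsubst{\taylor N}{\vec x}{\prom{\vec a}}=\dsubst{\taylor N}{\vec x,y}{\prom{\vec a},\prom{\set y}}\in\realize[\fS]{B}\subseteq\fS$.
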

\begin{proof}[Sketch]
By structural induction on the derivation of $x_1:A_1,\dotsc,x_n:A_n\vdash M:B$. The cases where the last rule is \AbsRuleName{} or \PlusRuleName{} use respectively the facts that the realizers are saturated and closed by finite unions. The case where the last rule is \SubRuleName{} is an immediate consequence of the induction hypothesis and of a lemma stating that $A\subtype B$ implies $\realize[\fS] A\subseteq\realize[\fS] B$.\qed
\end{proof}

\begin{theorem}\label{theorem:realizability}
If  $\fS\subseteq\powerset{\resourceTerms}$ is adapted and $M$ is typable in System $\mathcal D_+$, then $\taylor{M}\in\fS$. 
\end{theorem}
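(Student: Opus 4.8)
The plan is to obtain Theorem~\ref{theorem:realizability} as an immediate consequence of the Adequacy Lemma (Lemma~\ref{lemma:realizability}) together with Lemma~\ref{lemma:relizers_saturated}: all the real content is already packaged in those two statements, and what remains is to instantiate Adequacy with the right realizers and to recognise that the resulting differential substitution acts trivially on $\taylor M$. First I would fix, for the typable term $M$, a derivable judgement $\Gamma\vdash M:B$ in $\dplus$. By a routine induction on derivations, every free variable of $M$ occurs in $\Gamma$, so we may write $\Gamma=x_1:A_1,\dotsc,x_n:A_n$ with $\mathrm{FV}(M)\subseteq\set{x_1,\dotsc,x_n}$; variables of $\Gamma$ that are not free in $M$ will turn out to be harmless, so there is no need to trim the context.

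Next I would choose, for each $i$, the realizer $a_i=\set{x_i}$. By the definition of $\neutralSets\fS$, the singleton $\set{x_i}$ belongs to $\neutralSets\fS$, and by Lemma~\ref{lemma:relizers_saturated} we have $\neutralSets\fS\subseteq\realize[\fS]{A_i}$, hence $\set{x_i}\in\realize[\fS]{A_i}$. Applying Lemma~\ref{lemma:realizability} then yields
\[
	\dsubst{\taylor M}{x_1,\dotsc,x_n}{\prom{\set{x_1}},\dotsc,\prom{\set{x_n}}}\in\realize[\fS]{B}.
\]

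It then remains to simplify the left-hand side. Since $\prom{\set{x_i}}$ is the set of all bags $\mset{x_i,\dotsc,x_i}$ (including the empty bag $1$), and the differential substitution $\dsubst{t}{x_i}{\mset{x_i,\dotsc,x_i}}$ (with $k$ copies of $x_i$) is non-zero exactly when $k=\degree{x_i}{t}$, in which case every summand in Equation~\eqref{eq:resource} equals $t$ itself, we get $\dsubst{\taylor M}{x_i}{\prom{\set{x_i}}}=\taylor M$ in the boolean setting; iterating over $i$ shows the whole substitution is the identity on $\taylor M$. Therefore $\taylor M\in\realize[\fS]{B}$, and finally $\realize[\fS]{B}\subseteq\fS$ by Lemma~\ref{lemma:relizers_saturated}, so $\taylor M\in\fS$.

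There is no deep obstacle here; the only point requiring care is the last simplification step. One must check that at the level of supports (i.e.\ over $\B$) the differential substitution of a variable by its own singleton bag really is the identity — the multiplicity factors $\degree{x}{t}!$ produced by Equation~\eqref{eq:resource} collapse to $1$ in $\B$ — and that $\prom{\set{x}}$ contains the empty bag, which is exactly what guarantees that variables declared in $\Gamma$ but not free in $M$ contribute nothing rather than annihilating the term.
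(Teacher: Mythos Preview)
Your proof is correct and follows exactly the paper's argument: instantiate Adequacy with the realizers $a_i=\set{x_i}\in\neutralSets\fS\subseteq\realize[\fS]{A_i}$, observe that the resulting differential substitution is the identity on $\taylor M$, and conclude via $\realize[\fS]{B}\subseteq\fS$. Your additional remarks on why the substitution is trivial (boolean collapse of the multiplicity factor, and the empty bag handling dummy variables) are welcome elaborations but do not depart from the paper's route.
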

\begin{proof}
Let $\Gamma\vdash M:B$. For any $x:A$ in $\Gamma$, $\{x\}\in\neutralSets{\fS}\subset\realize[\fS]{A}$ by Lemma~\ref{lemma:relizers_saturated} and Definition~\ref{def:adapted}. Hence, $\taylor{M}=\dsubst{\taylor M}{x_1,\dotsc,x_n}{\prom{\set{x_1}},\dotsc,\prom{\set{x_n}}}\in\realize[\fS]{B}$ by Lemma~\ref{lemma:realizability}. Again by Lemma~\ref{lemma:relizers_saturated}, $\realize[\fS]{B}\subseteq\fS$, so $\taylor{M}\in\fS$.\qed
\end{proof}

Now we look for conditions to ensure that a structure $\fS$ is \emph{adapted}. These conditions (Definition~\ref{def:dispersion}, \ref{def:hereditary} and~\ref{def:expandable}) are quite technical  but they are easy to check, in particular the structures $\boundedStructure$ and $\linearStructure$ enjoy them (Remark~\ref{rk:examples_realizabilit_conditions}). 


\begin{definition}
The \emph{height} $\height{s}$ of a resource term $s$ is defined inductively: $\height{x}\eqdef 1$,  $\height{\lambda x.s}\eqdef 1+\height{s}$ and $\height{\rappl{s_0}{\mset{s_1,\dotsc,s_n}}}\eqdef 1+\max_i(\height{s_i})$.
\end{definition}

\begin{definition}[Dispersed]\label{def:dispersion}
A set $a\subseteq\resourceTerms$ is \emph{dispersed} whenever for all $n\in\N$, 
and all finite set $V$ of variables, the set $\set{s\in a\st \height s\le n\text{ and }\fv s\subseteq V}$ is finite. A structure $\fS$ is dispersed whenever $\forall a\in \fS$, $a$ is dispersed. 
\end{definition}
%

\begin{definition}\label{def:project_set}
	Let $s\in\resourceTerms$ and $x\not\in\fv s$.
	We define \emph{immediate subterm projections}
	$\proj xs\in\powerset\resourceTerms$,
	$\proj 0s\in\powerset\resourceTerms$,
	$\mproj 1s\in\powerset{\prom\resourceTerms}$ and 
	$\proj 1s\in\powerset{\resourceTerms}$ as follows:
	\begin{itemize}
		\item if $s=λx.t$ then $\proj xs=\set t$; otherwise
			$\proj xs=\emptyset$;
		\item if $s=\rappl {t}{\ms u}$ then $\proj 0s=\set t$,
			$\mproj 1s=\set{\ms u}$, $\proj 1s=\support{\ms u}$;
			otherwise $\proj 0s=\mproj 1s=\proj 1s=\emptyset$.
	\end{itemize}
\end{definition}
Observe that up to $\alpha$-conversion and the hypothesis $x\not\in\fv s$,
the abstraction case is exhaustive. These functions obviously extend to sets of terms, up to some 
care about free variables. 
If $V\subseteq\variables$ is a set of variables, we write  $\resourceTerms[V]$ for the set
of  resource $\lambda$-terms with free variables in $V$.
\begin{definition}\label{def:project_structure}
	For all $V\subseteq \variables$ and $a\subseteq\resourceTerms[V]$,
	let
	\begin{align*}
		\proj 0a &\eqdef \Union_{s\in a}\proj 0s\subseteq\resourceTerms[V],&
		\mproj 1a &\eqdef \Union_{s\in a}\mproj 1s\subseteq\prom{\resourceTerms[V]},&
		\proj 1a &\eqdef \Union_{s\in a}\proj 1s\subseteq\resourceTerms[V],
	\end{align*}
	and if moreover $x\not\in V$, then let $\proj xa \eqdef \Union_{s\in a}\proj xs\subseteq\resourceTerms[V\union\set x]$.
\end{definition}

\begin{definition}[Hereditary]\label{def:hereditary}
A structure $\fS\subseteq\powerset{\resourceTerms}$ is said to be \emph{hereditary} if, $\fS$ is downwards closed, and for all $a\in\fS$, $\proj 0a\in\fS$, $\proj 1a\in\fS$ and for all $x\in\variables\setminus\fv a$, $\proj xa\in\fS$.
\end{definition}

\begin{definition}[Expandable]\label{def:expandable}
	A structure  $\fS\subseteq\powerset{\resourceTerms}$ is said to be \emph{expandable} 
	if, for all $x\in\variables$ and all $a\in\fS$, we have
	$\set{\rappl s{\mset x}\st s\in a}\in\fS$.
\end{definition}

\begin{remark}\label{rk:examples_realizabilit_conditions}
The structures $\fS_{\text{sgl}}$ (Theorem~\ref{th:thomas_theorem}) and $\linearStructure$, $\boundedStructure$ (Definition~\ref{def:finiteness structures}) are dispersed and expandable. The last two are also hereditary, while $\fS_{\text{sgl}}$ is not. 
\end{remark}
\begin{lemma}\label{lemma:condition_adaptedness}
	For any structure $\fT$ which is dispersed, hereditary and expandable, we have that $\finitarySets[\fT]$ is adapted.
\end{lemma}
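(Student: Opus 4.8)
The plan is to verify the three clauses of Definition~\ref{def:adapted} for $\fS \eqdef \dual{\cones\fT}$, using the hypotheses that $\fT$ is dispersed, hereditary and expandable. First, clause~\ref{cond:adaptedness_unions} (closure under finite unions) is essentially free: $\fS$ is a dual structure, hence closed under finite unions, since if $a\perp c$ and $b\perp c$ then $(a\cup b)\inter c = (a\inter c)\cup(b\inter c)$ is finite for every $c\in\cones\fT$. So the work concentrates on clauses~\ref{cond:adaptedness_saturate} and~\ref{cond:adaptedness_inclusions}.

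For clause~\ref{cond:adaptedness_saturate} ($\fS$-saturation), I would argue at the level of tests. Suppose $e, f_0, \dots, f_n \in \fS$ and $\rappl{\dsubst e x{\prom{f_0}}}{\prom{f_1}\dots\prom{f_n}} \in \fS$; I must show $\rappl{\lambda x.e}{\prom{f_0}\,\prom{f_1}\dots\prom{f_n}} \in \fS$, i.e.\ that this set meets every cone $\cone c$ ($c\in\fT$) finitely. The key observation is that $\rappl{\lambda x.e}{\prom{f_0}\dots\prom{f_n}} \dbetaRed \rappl{\dsubst e x{\prom{f_0}}}{\prom{f_1}\dots\prom{f_n}}$, so any term $t$ in the former set with $t \ge c$ reduces (in one step, at the head) to a term in the latter set that still lies above $c$ in the reduction order — more precisely, the head redex contraction sends $t$ to a sum of terms, at least one of which is $\ge c$, and all of which lie in $\rappl{\dsubst e x{\prom{f_0}}}{\prom{f_1}\dots\prom{f_n}}$. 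Thus the map sending such a $t$ to (a choice of) its contractum gives a finite-to-one correspondence — finite-to-one because $\rappl{\lambda x.e}{\prom{f_0}\dots\prom{f_n}}$ and its elements' contractions only differ in bounded ways (the same $e$, the same bags up to the differential substitution), and the preimage of a given contractum, restricted to terms of bounded height, is controlled using dispersedness. Intersecting with $\cone c$ for fixed $c$ bounds the relevant heights (a term $\ge c$ that is a head redex has height at most $\height c + 1$ or so, by inspection of how $\reducesTo$ interacts with height), so the image is finite by hypothesis, whence the preimage is finite.

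For clause~\ref{cond:adaptedness_inclusions} I would prove the four inclusions $\neutralSets\fS \subseteq (\fS\to\neutralSets\fS)$, $(\fS\to\neutralSets\fS)\subseteq(\neutralSets\fS\to\fS)$, $(\neutralSets\fS\to\fS)\subseteq\fS$ in turn, unfolding Definitions~\ref{def:functional} and~\ref{def:adapted}. The inclusion $\neutralSets\fS\subseteq\fS$ (needed to even make sense of $\neutralSets\fS$ as a source of the functional) and the middle inclusions reduce to statements of the form ``applying a structure in $\fS$ to a promotion of a structure in $\fS$, or prepending a head variable, stays in $\fS$'', and these are exactly where hereditarity and expandability of $\fT$ are used — hereditarity lets one pull a cone-test on $\rappl f{\prom a}$ back to cone-tests on $f$ via $\proj 0$ and on $a$ via $\proj 1$ (since a reduct of $\rappl s{\ms u}$ decomposes into a reduct of $s$ applied to reducts of the entries of $\ms u$, modulo the head redex already handled by saturation), and expandability handles the passage through the extra argument bag $\mset x$ in the definition of $\neutralSets\fS$. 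The strictness of the inclusions, where it matters, comes from $\fS_{\text{sgl}}$-style witnesses or from the fact that $\neutralSets\fS$ forbids bare non-variable terms while $\fS$ does not; but for adaptedness only the inclusions (not their strictness) are actually needed downstream, so I would focus on those.

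The main obstacle I anticipate is the saturation clause: proving that antireduction along a head redex is finite-to-one on each anticone requires carefully combining (i) a bound on the height of terms in $\cone c$ that are head redexes of the required shape, (ii) a bound on the free variables (they are among those of $c$ plus the bound variable $x$), and (iii) the dispersedness hypothesis to conclude finiteness from these two bounds — together with a matching argument that distinct head redexes cannot collapse to the same contractum too often. Getting the height bookkeeping exactly right (how $\height{}$ behaves under $\dbetaRed$ and under $\reducesTo$) is the delicate point; the hereditarity/expandability arguments for clause~\ref{cond:adaptedness_inclusions}, by contrast, are routine unfoldings once the projection lemmas of Definitions~\ref{def:project_set}--\ref{def:project_structure} are in hand.
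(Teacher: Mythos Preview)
Your overall plan matches the paper's: verify the three clauses of Definition~\ref{def:adapted}, with clause~\ref{cond:adaptedness_unions} immediate from duality, clause~\ref{cond:adaptedness_inclusions} resting on the auxiliary inclusion $\neutralSets{\fS}\subseteq\fS$ together with hereditarity and expandability, and clause~\ref{cond:adaptedness_saturate} being the hard one. Your treatment of clauses~\ref{cond:adaptedness_unions} and~\ref{cond:adaptedness_inclusions} is in line with the paper's sketch.

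The saturation argument, however, has a genuine gap. First, the specific claim that a head redex $t$ with $t\reducesTo c'$ satisfies $\height t\le\height{c'}+1$ is false: the term $t_k$ obtained by nesting $k$ copies of $\rappl{\lambda x.x}{\mset{-}}$ around a variable $z$ has $t_k\reducesTo z$ and is a head redex, yet $\height{t_k}$ grows without bound. So ``intersecting with $\cone c$ bounds the relevant heights'' does not hold as stated, and dispersedness of $\fT$ cannot be invoked on that basis (nor would it apply to $A$ directly, since $A$ is built from sets in $\fS$, not in $\fT$).

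More structurally, your head-contraction map only accounts for those $t\in A\cap\cone c$ whose witnessing reduction $t\reducesTo c'$ actually fires the outer redex. But $t=\rappl{\lambda x.s}{\ms u_0\cdots\ms u_n}$ may reduce to some $c'\in c$ by purely internal steps, so that $c'$ itself has the shape $\rappl{\lambda x.s'}{\ms u'_0\cdots\ms u'_n}$; in particular this covers all $t$ whose head contractum is $0$. For such $t$ there is no image in $B\cap\cone c$ to appeal to. This is exactly where the paper's sketch uses \emph{heredity}: the projections $\proj x$, $\proj 0$, $\proj 1$ send $c\in\fT$ to tests in $\fT$ against which $e$ and the $f_i$ (being in $\fS$) are finitary, pinning the components $s$ and the elements of the $\ms u_i$ to finite sets. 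Dispersion of $\fT$ then enters to bound the bag sizes $\card{\ms u_i}=\card{\ms u'_i}$, by showing that only finitely many $c'\in c$ of the required shape can arise once the component heights and free variables are bounded. Your plan uses hereditarity only for clause~\ref{cond:adaptedness_inclusions}; for saturation you will need it too, and the role you assign to dispersion has to be rethought along these lines.
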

\begin{proof}[Sketch]
One has to prove the three conditions of Definition~\ref{def:adapted}. Heredity and dispersion are used to obtain saturation  (Condition~\ref{cond:adaptedness_saturate}). The inclusions of Condition~\ref{cond:adaptedness_inclusions} need all hypotheses and an auxiliary lemma proving that $\neutralSets{\finitarySets[\fT]}\subseteq\finitarySets[\fT]$, for which dispersion is crucial. Finally, the closure under finite unions (Condition~\ref{cond:adaptedness_unions}) is satisfied by all finiteness structures. 
\qed
\end{proof}
\begin{corollary}\label{cor:realizability}
Let $\fT\subseteq\powerset\resourceTerms$ be dispersed, hereditary and expandable. For every strongly normalizable term $M$, we have $\taylor{M}\in\dual{\cones{\fT}}$. Hence, this holds for $\fT\in\set{\linearStructure, \boundedStructure}$ and for any of their subsets, such as $\fS_{\text{sgl}}\subset\boundedStructure$.
\end{corollary}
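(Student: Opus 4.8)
The plan is to obtain this corollary by chaining the three main results already established in Sections~\ref{sect:D+} and~\ref{sect:realizers}, the only genuinely new ingredient being an elementary monotonicity remark for the ``hence'' part.

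First I would fix a structure $\fT\subseteq\powerset\resourceTerms$ that is dispersed, hereditary and expandable. By Lemma~\ref{lemma:condition_adaptedness} the structure $\dual{\cones\fT}$ (written $\finitarySets[\fT]$ there) is adapted. Now take any strongly normalizable $M\in\ndTerms$: Theorem~\ref{theorem:sn:typable} produces a derivable judgement $\Gamma\vdash M:A$ in System $\mathcal D_+$, so $M$ is $\mathcal D_+$-typable. Applying Theorem~\ref{theorem:realizability} with the adapted structure $\fS\eqdef\dual{\cones\fT}$ then gives $\taylor M\in\fS=\dual{\cones\fT}$, which is the statement.

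For the explicit instances, I would appeal to Remark~\ref{rk:examples_realizabilit_conditions}: it states that $\linearStructure$ and $\boundedStructure$ are both dispersed, hereditary and expandable, so the first part applies verbatim to each of them. For an arbitrary subset $\fT'\subseteq\fT$ of such a structure --- in particular $\fS_{\text{sgl}}\subseteq\boundedStructure$, which is \emph{not} hereditary and hence not directly covered --- I would use that $\fT'\subseteq\fT$ entails $\cones{\fT'}\subseteq\cones\fT$, and then $\dual{\cones\fT}\subseteq\dual{\cones{\fT'}}$ by the contravariance of $\dual{(\,\cdot\,)}$ noted just after Definition~\ref{def:finiteness}. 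Hence $\taylor M\in\dual{\cones\fT}\subseteq\dual{\cones{\fT'}}$. The very same observation, applied to any $\fT\subseteq\boundedStructure$, also yields the first item of Theorem~\ref{th:finiteness=termination}.

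I do not expect a real obstacle at the level of the corollary: the substance is entirely in Lemma~\ref{lemma:condition_adaptedness} --- where dispersion and heredity are what make the saturation and inclusion conditions of adaptedness go through --- and in the realizability argument of Theorem~\ref{theorem:realizability}. The only subtlety here is recognising that the $\fS_{\text{sgl}}$ case cannot itself be an instance of the first part (since $\fS_{\text{sgl}}$ is not hereditary) and must instead be recovered through the inclusion $\dual{\cones\boundedStructure}\subseteq\dual{\cones{\fS_{\text{sgl}}}}$.
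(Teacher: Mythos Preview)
Your proposal is correct and follows essentially the same route as the paper: chain Theorem~\ref{theorem:sn:typable}, Lemma~\ref{lemma:condition_adaptedness} and Theorem~\ref{theorem:realizability} for the general statement, invoke Remark~\ref{rk:examples_realizabilit_conditions} for $\linearStructure$ and $\boundedStructure$, and handle subsets such as $\fS_{\text{sgl}}$ via the contravariance of $\dual{(\cdot)}$. Your explicit observation that $\fS_{\text{sgl}}$ fails heredity and therefore must be recovered through the inclusion argument is exactly the reason the paper phrases the last step in terms of arbitrary subsets.
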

\begin{proof}
The general statement follows from Theorems~\ref{theorem:sn:typable},~\ref{theorem:realizability} and Lemma~\ref{lemma:condition_adaptedness}. Remark~\ref{rk:examples_realizabilit_conditions} implies $\taylor{M}\in\dual{\cones{\fT}}$ for $\fT\in\set{\boundedStructure, \linearStructure}$. The rest of the statement follows because, for any structures $\fS,\fS'$, $\fS\subseteq\fS'$ implies $\dual{\fS'}\subseteq\dual{\fS}$.\qed
\end{proof}

\section{Finitary Terms Are Strongly Normalizing}\label{sect:fin_to_sn}

In this section we prove Theorem~\ref{theorem:finitary:implies:SN}, giving a sufficient condition for a structure $\fT$ to be able to test strong normalization. The condition is that $\fT$  includes at least $\linearStructure$, i.e.\ the set of all sets of linear terms\footnote{This condition can be slightly weakened replacing $\linearStructure$ with: $\set{a\subseteq\resourceTerms\st \text{$a$ linear and $\fv a$ finite}}$. 
However, we prefer to stick to the more intuitive definition of $\linearStructure$.}.

The proof is by contraposition, suppose that $M$ is divergent, then $\taylor{M}$ is not dual to some cone $\cone{a}$, with $a\in\linearStructure$. The proof enlightens two kinds of divergence in $\lambda$-calculus: the one generated by looping terms: $\mathbf\Omega\betaRed\mathbf\Omega\betaRed\dots$ and the other generated by infinite reduction sequences $(M_i)_{i\in\N}$ with an infinite number of different terms: $\mathbf{\Omega_3}\betaRed\appl{\mathbf{\Omega_3}}{\mathbf{\Delta_3}}\betaRed\appl{\appl{\mathbf{\Omega_3}}{\mathbf{\Delta_3}}}{\mathbf{\Delta_3}}\betaRed\dots$ (see Example~\ref{ex:lambda-terms}). 

In the first case, the cone $\cone{\ell(\mathbf\Omega)}$ of the linear expansion (Definition~\ref{def:finiteness structures}) of the looping term $\mathbf\Omega$ suffices to show up the divergence, since $\taylor{\mathbf\Omega}\inter\cone{\ell(\mathbf\Omega)}$ is infinite. Indeed the Taylor expansion of a looping term, say $\taylor{\mathbf\Omega}$, is a kind of ``contractible space'', where any resource term reduces to a smaller term within the same Taylor expansion or vanishes (see Example~\ref{ex:taylor_reduction_omega}). In particular, there are unboundedly large resource terms reducing to the linear expansion $\ell(\mathbf\Omega)$. 

In the case of an infinite reduction sequence of different terms, one should take, basically, the cone of all linear expansions of the terms occurring in the sequence: the linear expansion of a single term (or of a finite set of terms) might not suffice to test this kind of divergence. 
For example, $\taylor{\mathbf{\Omega_3}}\inter\cone{\ell(\mathbf{\Omega_3})}$ is finite, while $\taylor{\mathbf{\Omega_3}}\inter\cone{\{\ell(\mathbf{\Omega_3}), \ell(\appl{\mathbf{\Omega_3}}{\mathbf{\Delta_3}}), \dots
\}}$ is infinite, so $\taylor{\mathbf{\Omega_3}}\!\notin\!\dual{\cones{\linearStructure}}.$ 

In the presence of the non-deterministic sum $+$, we have a third kind of divergence, which is given by infinite reduction sequences of terms $(M_i)_{i\in\N}$ which are pairwise different but whose Taylor expansion support repeats infinitely many times: 
consider, \eg, the reducts of $\appl{\mathbf\Theta}{(\lambda x.x+y)}$. 
We prove that this kind of divergence is much more similar to a loop rather than to a sequence of different $\lambda$-terms. In particular, there is a single linear resource term (depending on the reduction sequence) whose cone is able to show up the divergence. Indeed, most of the effort in the proof of Theorem~\ref{theorem:finitary:implies:SN} is devoted to deal with this kind of ``looping Taylor expansion''. Namely, Definition~\ref{def:partial_red} gives a notion of non-deterministic reduction $\partialBetaRed$ allowing Lemma~\ref{lemma:infinite:bounded:reduction}, which is the key statement used in the proof of Theorem~\ref{theorem:finitary:implies:SN} for dealing with both the divergence of looping terms (like $\mathbf\Omega$) and that of looping Taylor expansions (like $\appl{\mathbf\Theta}{(\labs x{x+y})}$). 


\medskip

We introduce a reduction rule
$\partialBetaRed$ on $\ndTerms$ which corresponds to one step of
$\beta$-reduction and a potential loss of some addenda in a term. For
that, we need an order $\ndge$ on $\ndTerms$ expressing this
loss. For instance, $\appl{\mathbf\Theta}{(\labs x{x+y})}+y\ndge \appl{\mathbf\Theta}{\labs x{x}}$, thus $\appl{\mathbf\Theta}{(\labs x{x+y})}\partialBetaRed^\ast \appl{\mathbf\Theta}{\labs xx}$, and similarly, $\appl{\mathbf\Theta}{(\labs x{x+y})}\partialBetaRed^\ast y$.
 
\begin{definition}[Partial reduction]\label{def:partial_red}
  We write $M\partialBetaRed N$ if there exists $P$ such that
  $M\betaRed P$ and $P\ndge N$, where the partial order $\ndle$ over
  $\Lambda_+$ is defined as the least order such that 
    $M\ndle M+N$; $N\ndle M+N$ 
  and if $M\ndle N$ then: $M+P\ndle N+P$, $\labs x M\ndle \labs xN$, $\appl M P\ndle
  \appl NP$, and $\appl PM\ndle \appl PN$.

A reduction $M\partialBetaRed N$ is at \emph{top
    level} if $M=\appl{\labs x{M'}}{M''}\rightarrow
  M'[M''/x]\ndge N$.
\end{definition}
Write $s\strictReducesTo t$ whenever $s\reducesTo t$ (Definition~\ref{def:reduce}) and $s\not=t$:
this is a strict partial order relation. 
\begin{lemma}
  \label{lemma:resource:antireduction} \label{lemma:resource:antireduction:toplevel}

  If $M\partialBetaRed N$ and $t\in\taylor N$, then there exists
  $s\in\taylor M$ such that $s\reducesTo t$.
  If moreover, $M\partialBetaRed N$ is at top level, then $s\strictReducesTo t$.
\end{lemma}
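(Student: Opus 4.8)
The plan is to split the reduction $\partialBetaRed$ into its two ingredients. By Definition~\ref{def:partial_red}, $M\partialBetaRed N$ amounts to $M\betaRed P$ and $N\ndle P$ for some $P$, the top-level case being the one where $P=\subst{M'}{x}{M''}$ with $M=\appl{\labs x{M'}}{M''}$, i.e.\ where the contracted redex is the head redex of $M$. So I would first dispose of the ``loss of addenda'' $N\ndle P$: a straightforward induction on the derivation of $N\ndle P$ shows $\taylor N\subseteq\taylor P$, the base cases using $\taylor{M+N}=\taylor M\union\taylor N$, and each context-closure rule being discharged by the matching clause of Table~\ref{table:taylor_support}, since every resource-term constructor is monotone on supports (e.g.\ $\taylor{\appl{M_0}{N_0}}=\set{\rappl s{\ms t}\st s\in\taylor{M_0},\ \ms t\in\prom{\taylor{N_0}}}$ increases with $\taylor{M_0}$ and $\taylor{N_0}$). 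Hence $t\in\taylor N$ gives $t\in\taylor P$, and it suffices to prove the statement with the single $\beta$-step $M\betaRed P$ in place of $M\partialBetaRed N$, the top-level hypothesis now reading that the contracted redex is $M$ itself. I would establish this by induction on $M$, according to the position of the redex.

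In the base case $M=\appl{\labs x{M'}}{M''}\betaRed\subst{M'}{x}{M''}=P$, I would invoke the commutation of Taylor expansion with substitution \cite{difftaylor}, which at the level of supports reads $\taylor{\subst{M'}{x}{M''}}=\Union\set{\dsubst sx{\ms u}\st s\in\taylor{M'},\ \ms u\in\prom{\taylor{M''}}}$. Then $t\in\dsubst{s_0}x{\ms u}$ for some $s_0\in\taylor{M'}$ and $\ms u\in\prom{\taylor{M''}}$, so that $s\eqdef\rappl{\labs x{s_0}}{\ms u}\in\taylor{\appl{\labs x{M'}}{M''}}=\taylor M$ satisfies $s\dbetaRed\dsubst{s_0}x{\ms u}\ni t$, i.e.\ $s\reducesTo t$. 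Moreover this is a non-empty resource reduction, and such a reduction strictly decreases the number of symbols of every term in the resulting sum; hence $s\neq t$, that is $s\strictReducesTo t$.

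In the inductive cases the redex sits strictly inside $M$, so we are not at top level and strictness is not needed. If $M=M_1+M_2$ with the redex in $M_1$ (so $M_1\betaRed P_1$, $P=P_1+M_2$), then either $t\in\taylor{M_2}\subseteq\taylor M$, in which case $s=t$ works, or $t\in\taylor{P_1}$ and the induction hypothesis gives $s\in\taylor{M_1}\subseteq\taylor M$ with $s\reducesTo t$. If $M=\labs y{M_0}$ with $M_0\betaRed P_0$, write $t=\labs y{t_0}$ with $t_0\in\taylor{P_0}$; the induction hypothesis yields $s_0\in\taylor{M_0}$ with $s_0\dbetaRed^*a_0\ni t_0$, whence $\labs y{s_0}\in\taylor M$ and $\labs y{s_0}\dbetaRed^*\labs y{a_0}\ni t$ by compatibility. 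If $M=\appl{M_0}{N_0}$ with the redex in $M_0$, write $t=\rappl u{\ms v}$ with $u\in\taylor{P_0}$ and $\ms v\in\prom{\taylor{N_0}}$; the induction hypothesis gives $s_0\in\taylor{M_0}$ with $s_0\dbetaRed^*a_0\ni u$, hence $\rappl{s_0}{\ms v}\in\taylor M$ and $\rappl{s_0}{\ms v}\dbetaRed^*\rappl{a_0}{\ms v}\ni t$. Finally if $M=\appl{M_0}{N_0}$ with the redex in $N_0$ (so $N_0\betaRed Q_0$, $P=\appl{M_0}{Q_0}$), write $t=\rappl u{\mset{v_1,\dots,v_k}}$ with $u\in\taylor{M_0}$ and each $v_j\in\taylor{Q_0}$; applying the induction hypothesis to each $v_j$ yields $w_j\in\taylor{N_0}$ with $w_j\dbetaRed^*b_j\ni v_j$, so that $\rappl u{\mset{w_1,\dots,w_k}}\in\taylor M$ and, reducing the bag entries one after another (using that $\dbetaRed$ propagates through bags by compatibility and linearity), $\rappl u{\mset{w_1,\dots,w_k}}\dbetaRed^*\rappl u{\mset{b_1,\dots,b_k}}$, a sum that contains $t=\rappl u{\mset{v_1,\dots,v_k}}$. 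In every case $s\reducesTo t$, which closes the induction.

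The main obstacle is concentrated in the base case, namely the support-level substitution identity $\taylor{\subst{M'}{x}{M''}}=\Union\set{\dsubst sx{\ms u}\st s\in\taylor{M'},\ \ms u\in\prom{\taylor{M''}}}$: this is essentially the Taylor-expansion substitution lemma of Ehrhard and Regnier \cite{difftaylor} — there phrased for the $\Q$-weighted expansion, from which the present $\B$-version follows by taking supports since no coefficient cancels — and it can equally well be proven by a direct induction on $M'$. Everything else is bookkeeping: the monotonicity of $\taylor{\cdot}$ along $\ndle$, and the propagation of resource reductions through the constructors in the inductive step, where the only mild subtlety is the handling of bags in the two application cases.
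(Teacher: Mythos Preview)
The paper states this lemma without proof, so there is no argument to compare against; your proposal is correct and is exactly the natural route. The decomposition into the monotonicity step $N\ndle P\Rightarrow\taylor N\subseteq\taylor P$ and the simulation step for a single $\beta$-reduction $M\betaRed P$, the induction on the position of the contracted redex, and the appeal to the support-level substitution identity in the base case all go through. The strictness in the top-level case is indeed immediate from the size drop $\size{\rappl{\labs x{s_0}}{\ms u}}>\size t$ for any $t\in\dsubst{s_0}{x}{\ms u}$, which is the same observation the paper uses later in the proof of Theorem~\ref{theorem:finitary:implies:SN}. One small caveat: \cite{difftaylor} treats the deterministic calculus, so the substitution lemma for $\ndTerms$ technically requires the direct induction on $M'$ that you mention, the only new case being $M'=M_1+M_2$, which is trivial since both sides distribute over the union $\taylor{M_1}\cup\taylor{M_2}$.
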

  


\begin{lemma}
	\label{lemma:partial:reduction:preserves:finiteness}
	Let $M\partialBetaRed N$ and $u\in\resourceTerms$. If
	$\taylor N\cap\cone u$ is infinite, then  $\taylor M\cap\cone u$ is also infinite.
\end{lemma}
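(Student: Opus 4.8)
The plan is to deduce Lemma~\ref{lemma:partial:reduction:preserves:finiteness} from the already stated Lemma~\ref{lemma:resource:antireduction} together with a combinatorial finiteness argument. Recall from Lemma~\ref{lemma:resource:antireduction} that if $M\partialBetaRed N$ and $t\in\taylor N$, then there is some $s\in\taylor M$ with $s\reducesTo t$. So define a map $\phi$ sending each $t\in\taylor N\cap\cone u$ to some chosen $s\in\taylor M$ with $s\reducesTo t$ (this choice exists by the lemma). Since $t\in\cone u$ means $t\ge u$, and $s\ge t\ge u$ by transitivity of the reduction order, we get $s\in\taylor M\cap\cone u$. Thus $\phi$ maps $\taylor N\cap\cone u$ into $\taylor M\cap\cone u$, and it suffices to show that $\phi$ has finite fibres: then, if the domain is infinite, so is the image, hence so is $\taylor M\cap\cone u$.

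So the key step is: for a fixed $s\in\resourceTerms$, there are only finitely many $t\in\resourceTerms$ with $s\reducesTo t$. This is where the termination and finite-branching of resource reduction come in. By definition $s\reducesTo t$ means $s\dbetaRed^* a$ with $t\in a$ for some $a\in\lcomb\B\resourceTerms$; since $\{s\}$ has finite support, Proposition~\ref{prop:nf_resource} and the structure of $\dbetaRed$ guarantee that each one-step reduct of a finite sum is again a finite sum, and that reduction terminates. A single resource term has finitely many one-step reducts (the redex positions are finitely many, and each differential substitution is a finite sum), so by König's lemma the reduction tree from $s$ is finite; hence the set $\{t : s\reducesTo t\}$ of all resource terms appearing in any reduct of $s$ is finite. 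I would likely isolate this as a small auxiliary observation (or it may already be folklore from~\cite{difftaylor}).

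Putting it together: assume $\taylor N\cap\cone u$ is infinite. The map $\phi\colon \taylor N\cap\cone u\to\taylor M\cap\cone u$ defined above has, for each $s$ in its image, fibre contained in $\{t : s\reducesTo t\}$, which is finite by the auxiliary observation. An infinite set cannot be covered by finitely many finite sets indexed by a finite image, so the image of $\phi$ must be infinite; a fortiori $\taylor M\cap\cone u$ is infinite, as required.

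The main obstacle is the auxiliary finiteness claim that $\{t : s\reducesTo t\}$ is finite for each fixed $s$. The subtlety is that $s\reducesTo t$ quantifies over all reduction sequences and all reducts $a$, not just the normal form, so one genuinely needs that the whole reduction tree below $s$ is finite — relying on both strong normalization of $\dbetaRed$ on finite-support sums (Proposition~\ref{prop:nf_resource}) and finite branching of one-step reduction. Once that is in hand, the rest is the routine pigeonhole argument sketched above. One should also double-check that $\cone u$ is upward closed under $\reducesTo$, i.e.\ that $s\ge t$ and $t\ge u$ give $s\ge u$, which is exactly transitivity of $\reducesTo$ (noted right after Definition~\ref{def:reduce}).
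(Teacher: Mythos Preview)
Your proposal is correct. The paper omits the proof of this lemma entirely, so there is no argument to compare against; your route via Lemma~\ref{lemma:resource:antireduction}, transitivity of $\reducesTo$, and a finite-fibres pigeonhole is exactly the natural one and constitutes a complete proof.

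One small simplification for the auxiliary claim that $\{t\st s\reducesTo t\}$ is finite: rather than invoking König's lemma on the reduction tree, you can observe directly that $s\reducesTo t$ implies $\size s\ge\size t$ (resource reduction never duplicates subterms --- this fact is stated and used in the proof of Theorem~\ref{theorem:finitary:implies:SN}) and that $\fv t\subseteq\fv s$. Hence $\{t\st s\reducesTo t\}$ is contained in the finite set of resource terms of size at most $\size s$ with free variables in $\fv s$. This sidesteps the appeal to strong normalization and finite branching, though your argument is also valid.
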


\begin{definition}
The \emph{height} $\height{M}$ of a term $M\in\ndTerms$ is defined inductively as follows:
$\height{x}\eqdef 1$,  $\height{\labs xM}\eqdef 1+\height{M}$, $\height{\appl{M}{N}}\eqdef 1+\max(\height M,\height N)$ and $\height{{M}+{N}}\eqdef \max(\height M,\height N)$.
\end{definition}



\begin{lemma}
	\label{lemma:infinite:bounded:reduction}
	Let $(M_i)_{i\in\N}$ be a sequence. If $\forall i\in\N$, $M_i\partialBetaRed
	M_{i+1}$ and 
        $(\height{M_i})_{i\in\N}$ is bounded, then
        there exists a linear term $t$ such that 
	$\taylor{M_0}\cap\cone t$ is infinite. 
\end{lemma}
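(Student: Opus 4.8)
The plan is to argue by induction on a bound $h$ for the heights $\height{M_i}$; the base case ($h$ too small to admit a $\beta$-redex) is vacuous. I rely on two elementary facts: (i) a resource $\beta$-step strictly decreases the size of its contractum, hence if $u\dbetaRed^* a$ with at least one step then $\size s<\size u$ for all $s\in a$, so that $s\strictReducesTo s'$ implies $\size{s'}<\size s$; and (ii) for a fixed finite variable set and a fixed height bound there are finitely many \emph{linear} resource terms, so the family $(\lEmbed{M_i})_i$ — each $\lEmbed{M_i}$ being a nonempty finite set of linear terms of height $\le h$ with free variables among $\fv{M_0}$ — takes finitely many values. By Lemma~\ref{lemma:partial:reduction:preserves:finiteness}, it suffices to find a linear $t$ with $\taylor{M_{i_0}}\cap\cone t$ infinite for some $i_0$, so I may freely pass to tails of $(M_i)_i$.

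The central step is to extract from $(M_i)_i$ an infinite sequence $(T_i)_i$ of \emph{simple} terms (terms that are not sums) such that each $T_i$ occurs as a summand of $M_i$, with $T_i\partialBetaRed T_{i+1}$ whenever $T_i\neq T_{i+1}$ and infinitely many transitions non-trivial. Since no $\partialBetaRed$-step is at top level unless the whole term is a redex, each step of $(M_i)_i$ contracts a redex lying in a unique simple summand of $M_i$; I would build the tree whose level-$i$ nodes are the divergent simple summands of $M_i$, with an edge $S\to S'$ when $S'$ is $S$ itself, or is a summand of the reduct of $S$ in case the $i$th step acts inside $S$, and then extract the thread $(T_i)_i$ by a K\"onig's-lemma argument arranged so that the chosen branch keeps following a divergent summand.

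Along $(T_i)_i$ I then distinguish two cases. If infinitely many non-trivial transitions contract the whole of $T_i$ (so $T_i$ is itself a redex and the step is at top level of $T_i$), I thin to a sub-sequence all of whose transitions are of this kind: an infinite $\partialBetaRed$-chain of simple terms, of height $\le h$, all steps at top level. On this chain I run the $\mathbf\Omega$-style argument: fixing, by (ii), a value $L$ of $\lEmbed{\cdot}$ attained infinitely often along it and a term $t\in L$, and running Lemma~\ref{lemma:resource:antireduction} backwards from a witness of $t$ — every link being strict since every step is at top level — I obtain, for each $n$, a term $s_n$ in the Taylor support of the first chain-term with $\size{s_n}\ge\size t+n$, hence infinitely many elements of that support in $\cone t$; transporting back to $M_0$ via Lemma~\ref{lemma:partial:reduction:preserves:finiteness} concludes. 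Otherwise cofinitely many non-trivial transitions act strictly inside an immediate subterm of $T_i$, which has height $\le h-1$; after thinning, and using that "being an abstraction" resp.\ "being an application with a fixed function (argument) part" is preserved, projecting onto that subterm yields an infinite $\partialBetaRed$-sequence $(N_i)_i$ of height $\le h-1$. The induction hypothesis gives a linear $t'$ with $\taylor{N_0}\cap\cone{t'}$ infinite, and plugging its elements back into a linearised skeleton of $T_0$ (using a linear term drawn from the linear Taylor expansion of the sibling subterm) produces, by compatibility of $\dbetaRed$ with the constructors, infinitely many elements of $\taylor{T_0}\cap\cone t$ for the linear $t$ built from $t'$ — again concluding via Lemma~\ref{lemma:partial:reduction:preserves:finiteness}.

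I expect the main obstacle to be the K\"onig's-lemma extraction: the tree of divergent simple summands must be organised so that the branch it produces genuinely follows the divergence infinitely often, rather than stabilising on a single summand, and one must check that the projected sequence in the second case really stays under the strictly smaller bound $h-1$; likewise, in the first case one must keep the backward anti-reduction path inside the active lineage so that the promised number of strict links is realised. The remaining ingredients — the two elementary facts, preservation of the "shape" of $T_i$, and compatibility of resource reduction with contexts — are routine.
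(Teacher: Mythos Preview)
Your approach is essentially the paper's: reduce to a sequence $(S_i)$ of simple terms, induct on the height bound $h$, and split on whether infinitely many steps are at top level (pigeonhole on a finite family of supports, then iterate Lemma~\ref{lemma:resource:antireduction} to build an infinite strictly $\reducesTo$-increasing family inside one $\taylor{S_{i_0}}\cap\cone{s_0}$) or only finitely many (project to an immediate subterm of height $\le h-1$, apply the induction hypothesis, then rewrap using a linear witness for the sibling subterm). One caution: your ``thin to a sub-sequence all of whose transitions are at top level'' does not literally yield a $\partialBetaRed$-chain, since $\partialBetaRed$ is a single $\beta$-step followed by $\ndge$ and does not compose under subsequences; keep the full chain instead and, when antireducing your fixed $t\in\lEmbed{T_j}$ back to $\taylor{T_0}$ via Lemma~\ref{lemma:resource:antireduction}, simply count the top-level steps between indices $0$ and $j$---that count tends to infinity, which is all you need for $\size{s_j}\ge\size t+\text{(count)}$.
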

\begin{proof}[Sketch]
  First, it is sufficient to address the case of a sequence $(S_i)_{i\in\N}$ of simple
  terms (i.e.\ without $+$ as the top-level constructor) such that $S_i\partialBetaRed S_{i+1}$ for
  all $i\in\N$ and $(\height{S_i})_{i\in\N}$ is bounded. Besides, by
  Lemma~\ref{lemma:partial:reduction:preserves:finiteness}, it is
  sufficient to have $\taylor{S_{i_0}}\cap\cone t$ infinite for some
  $i_0\in\N$.

  Then, by induction on $h=\max\set{\height{S_i}\st i\in\N}$, we show
  that there exists $i_0\in\N$ and a sequence
  $(s_j)_{j\in\N}\in\taylor{S_{i_0}}^\N$ such that $s_0$ is linear
  and, for all $j\in\N$, $s_{j+1}\strictReducesTo s_j$. Since
  $\strictReducesTo$ is a strict order relation, this implies that the
  set $\set{s_j\st j\in\N}\subseteq\taylor{S_{i_0}}\inter\cone{s_0}$
  is infinite.



  First assume that there are infinitely many top level reductions.
  Observe that, since $\height{S_i}\le h$ and
  $\fv{S_i}\subseteq\fv{S_0}$ for all $i$, the set
  $\set{\taylor{S_i}\st i\in\N}$ is finite.  Hence there exists an
  index $i_0\in\N$ such that $\set{i\in\N\st
    \taylor{S_i}=\taylor{S_{i_0}}}$ is infinite.  As there are
  infinitely many top level reductions, there are $i_1$ and $i_2$
  such that $i_0<i_1<i_2$, the reduction $i_1$ is at top level and
  $\taylor{S_{i_2}}=\taylor{S_{i_0}}$. We inductively define the required sequence
  by choosing arbitrary
  $s_0\in\lEmbed{S_{i_0}}\subseteq\taylor{S_{i_0}}$, and by iterating
  Lemma~\ref{lemma:resource:antireduction}: for
  $s_j\in\taylor{S_{i_0}}=\taylor{S_{i_2}}$,  we obtain $s_{j+1}\in \taylor{S_{i_0}}$ with $s_{j+1}
  \strictReducesTo s_j$  since the reduction $i_1$
  is at top level.

  Now assume that there are only finitely many top level reductions.
  Let $i_1$ be such that no reduction $S_i\partialBetaRed S_{i+1}$
  with $i\ge i_1$ is at top level.
  Either for all $i\ge i_1$, $S_i=\labs x{M'_i}$ with
  $M'_i\partialBetaRed M'_{i+1}$, and we conclude by applying the
  induction hypothesis to the sequence $(M'_{i+i_1})_{i\in\N}$;
  or for all $i\ge i_1$, $S_i=\appl{M'_i}{N'_i}$ so that $(M'_i)_{i\ge
    i_1}$ and $(N'_i)_{i\ge i_1}$ are sequences of terms, 
	at least one of them involving infinitely many partial reductions.
  In this case, assume for instance that we can extract from $(M'_i)_{i\ge i_1}$
	an infinite subsequence $(M'_{\phi(i)})_{i\in\N}$ of partial reductions.
	It provides $i'_0$ and a sequence $(s'_j)\in
  	\mathcal T(M'_{\phi(i'_0)})^N$
  with $s'_0$ linear and, 
$s'_{j+1}\strictReducesTo s'_j$. Fix
  $t\in\ell(N'_{\phi(i'_0)})$ arbitrarily. So we set
  $i_0=\phi(i'_0)$ and $s_j=\rappl{s'_j}{\mset t}$ for all $j\in\N$.\qed
\end{proof}


\begin{theorem}
  \label{theorem:finitary:implies:SN}
  Let $\fT$ be a structure such that
  $\linearStructure\subseteq\fT$.  If $\taylor
  M\in\finitarySets[\fT]$ then $M$ is strongly normalizable. In
  particular, this holds for $\fT\in\set{\linearStructure,
    \boundedStructure}$.
\end{theorem}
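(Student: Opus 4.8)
The plan is to prove the contrapositive. Assume $M$ is not strongly normalizable and fix an infinite reduction $M=M_0\betaRed M_1\betaRed M_2\betaRed\cdots$. I will produce a \emph{linear} set $a\subseteq\resourceTerms$ such that $\taylor M\cap\cone a$ is infinite; since $\linearStructure\subseteq\fT$, the set $\cone a$ lies in $\cones\fT$, so this exhibits $\taylor M\notin\dual{\cones\fT}=\finitarySets[\fT]$, and the final ``in particular'' clause follows because $\linearStructure\subseteq\boundedStructure$. Each step $M_i\betaRed M_{i+1}$ is in particular a partial reduction $M_i\partialBetaRed M_{i+1}$ (take the intermediate term to be $M_{i+1}$ itself and use $M_{i+1}\ndge M_{i+1}$), so the machinery of Section~\ref{sect:fin_to_sn} is available. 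I then split according to whether the sequence $(\height{M_i})_{i\in\N}$ is bounded — morally, the distinction between the ``$\mathbf\Omega$-like'' divergence of looping terms and the ``$\mathbf{\Omega_3}$-like'' divergence through terms of unbounded height.

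If $(\height{M_i})_i$ is bounded, then Lemma~\ref{lemma:infinite:bounded:reduction} applies to $(M_i)_i$ directly and yields a linear term $t$ with $\taylor{M_0}\cap\cone t$ infinite, so $a\eqdef\set t$ works (recall $\cone{\set t}=\cone t$). It is worth noting that, since this lemma is phrased for $\partialBetaRed$ (through Lemmas~\ref{lemma:resource:antireduction} and~\ref{lemma:partial:reduction:preserves:finiteness}), this single case already subsumes not only genuine loops such as $\mathbf\Omega$ but also the ``looping Taylor expansion'' of $\appl{\mathbf\Theta}{(\labs x{x+y})}$.

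If $(\height{M_i})_i$ is unbounded, I would first record, by a routine induction on terms, that $\lEmbed N$ always contains a resource term of height at least $\height N$ (the abstraction and application cases are immediate from the inductive definitions, and the sum case uses $\lEmbed M\subseteq\lEmbed{M+N}$ together with $\height{M+N}=\max(\height M,\height N)$). Consequently I can extract a subsequence $(M_{i_k})_{k\in\N}$ and choose $t_k\in\lEmbed{M_{i_k}}\subseteq\taylor{M_{i_k}}$ whose heights strictly increase, so the $t_k$ are pairwise distinct. Iterating Lemma~\ref{lemma:resource:antireduction} backwards along $M_0\partialBetaRed\cdots\partialBetaRed M_{i_k}$, and using transitivity of $\reducesTo$, I get for each $k$ a term $s_k\in\taylor{M_0}=\taylor M$ with $s_k\reducesTo t_k$. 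Setting $a\eqdef\set{t_k\st k\in\N}$ — a linear set — we then have $s_k\in\taylor M\cap\cone a$ for every $k$.

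Finally, $\taylor M\cap\cone a$ must be infinite: were it finite, the sequence $(s_k)_k$ would take some value $s$ for infinitely many indices $k$, so $s\reducesTo t_k$ for infinitely many pairwise distinct $t_k$; but $\set{t\in\resourceTerms\st s\reducesTo t}$ is finite, since $\dbetaRed$ is finitely branching on finite sums (a resource term has finitely many redex occurrences, each yielding a single reduct, and supports stay finite) and terminating on finite sums by Proposition~\ref{prop:nf_resource}, so the reduction graph issued from $\set s$ is finite — a contradiction. I expect the unbounded case to be the genuine obstacle: the bounded case is essentially already encapsulated in Lemma~\ref{lemma:infinite:bounded:reduction}, whereas here there is no single ``growing'' test to exploit and one must instead leverage the finiteness of the reduction graph of an individual resource term; it is exactly this phenomenon (cf.\ $\mathbf{\Omega_3}$ and Remark~\ref{rk:counterexample_singletons}) that rules out taking $\fT=\fS_{\text{sgl}}$ and motivates the hypothesis $\linearStructure\subseteq\fT$.
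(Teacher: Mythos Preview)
Your proof is correct and follows essentially the same approach as the paper: contrapositive, case split on whether $(\height{M_i})_i$ is bounded, Lemma~\ref{lemma:infinite:bounded:reduction} in the bounded case, and Lemma~\ref{lemma:resource:antireduction} together with linear expansions in the unbounded case. The only notable variation is how you conclude that $\taylor M\cap\cone a$ is infinite in the unbounded case: the paper observes that $\size{s_i}\ge\size{t_i}\ge\height{t_i}$ (since $\dbetaRed$ never increases size), so the $s_i$ have unbounded size and are thus infinitely many; you instead argue by finiteness of the reduction graph below a single resource term, which is also valid but slightly less direct.
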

\begin{proof}
  Assume that $(M_i)_{i\in\N}$ is such that $M=M_0$ and for all $i$,
  $M_i\betaRed M_{i+1}$.  We prove that $\taylor
  M\not\in\finitarySets[\fT]$ by exhibiting $a\in\fT$ such that
  $\taylor M\not\perp \cone a$.

  If $(\height{M_i})_{i\in\N}$ is bounded, then fix $a=\set t$ with
  $t$ given by Lemma~\ref{lemma:infinite:bounded:reduction}.
	
  Otherwise, $\forall i\in\N$, fix $t_i\in\lEmbed{M_i}$ such that
  $\height{t_i}=\height{M_i}$.
  Lemma~\ref{lemma:resource:antireduction} implies that there is
  $s_i\in\taylor M$ such that $s_i\reducesTo t_i$.  Denote by $\size{s}$ the number of symbols occurring in $s$. Since there is no duplication in reduction $\dbetaRed$ it should be clear that
if $s\reducesTo t$ then $\size s\ge\size t$. Besides, $\size s\ge\height s$. Therefore, since
  $\set{\height{M_i}\st i\in\N}$ is unbounded, $\set{\height{t_i}\st
    i\in\N}$, $\set{\size{t_i}\st i\in\N}$ and $\set{\size{s_i}\st
    i\in\N}$ are unbounded. Fix $a=\Union_{i\in\N}\lEmbed{M_i}\in\fT$,
  we have proved that $\taylor M\inter \cone a$ is infinite.\qed
\end{proof}

Notice that the structure $\fS_{\text{sgl}}$ of singletons used in~\cite{Ehrhard10lics} does not enjoy the hypothesis of Theorem~\ref{theorem:finitary:implies:SN} ($\linearStructure\not\subset\fS_{\text{sgl}}$). In fact:
\begin{remark}\label{rk:counterexample_singletons}
We prove that $\taylor{\mathbf{\Omega_3}}\in\dual{\cones{\fS_{\text{sgl}}}}$, although $\mathbf{\Omega_3}$ is not normalizing. 
Recall from Example~\ref{ex:taylor_expansion}, that the support of the Taylor expansion of $\mathbf{\Omega_3}$ is made of terms of the form $\rappl{\delta_{n_0,m_0}}{\mset{\delta_{n_1,m_1},\dots,\delta_{n_k,m_k}}}$ (for $k,n_i,m_i\in\N$). Write 
$
	\Delta_h=\set{
		\rappl{\delta_{-,-}}{\mset{\dots\delta_{-,-}\dots}\cdots\mset{\dots\delta_{-,-}\dots}}\text{ with $h$ bags}
	}
$: in particular $\taylor{\mathbf{\Omega_3}}=\Delta_1$. One can easily check that if $s\in \Delta_h$ and $s\reducesTo s'$, then $s'\in\Delta_{h'}$ with $h\le h'$. A careful inspection of such reductions shows that they are moreover reversible: for all $s'\in\Delta_{h'}$ and all $h\le h'$ there is exactly one $s\in\Delta_h$ such that $s\reducesTo s'$. It follows that $\Delta_1\cap \uparrow s$ is either empty or a singleton. Therefore  $\taylor{\mathbf{\Omega_3}}\in\dual{\cones{\fS_{\text{sgl}}}}$.
\end{remark}

\section{Conclusion}\label{sect:conclusion}
%

We achieved all implications of Figure~\ref{fig:main_results}, but the rightmost one, concerning the finiteness of the coefficients in the normal form of the Taylor expansion of a strongly normalizing $\lambda$-term (recall Equation~\eqref{eq:power_series_taylor} in the Introduction). 
%

Thanks to the definition of cones (Definition~\ref{def:cones}) we immediately have the following lemma, which is the last step to Corollary~\ref{cor:NF}.
\begin{lemma}
	\label{lemma:NF}
	Let $\fT$ be a structure. If $\taylor M\in\finitarySets[\fT]$, then $\forall t\in\bigcup\fT$, $\NF(\taylor M)_t$ is finite.
%
%
\end{lemma}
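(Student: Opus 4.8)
The plan is to unwind the two definitions at play. On one side, $\finitarySets[\fT]$ is by definition $\dual{\cones\fT}$, so the hypothesis $\taylor M\in\finitarySets[\fT]$ says exactly that $\taylor M\cap\cone a$ is finite for every $a\in\fT$. On the other side, by Equation~\eqref{eq:normal_form} we have $\NF(\taylor M)=\sum_{s\in\taylor M}\NF(s)$, so that $\NF(\taylor M)_t=\sum_{s\in\taylor M}\NF(s)_t$; since a single resource term has a normal form with finite support (Proposition~\ref{prop:nf_resource}), every individual summand $\NF(s)_t$ is a finite number, and therefore it suffices to show that only finitely many $s\in\taylor M$ contribute, i.e.\ satisfy $\NF(s)_t\neq 0$.

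The only point that actually requires an argument — and it is a one-liner — is that such contributing terms all sit in the cone of $t$. Indeed, $\NF(s)_t\neq 0$ means $t\in\support{\NF(s)}$, and since $s\dbetaRed^*\NF(s)$ this is precisely the relation $s\reducesTo t$ of Definition~\ref{def:reduce}, i.e.\ $s\in\cone{\set t}=\cone t$. (Note this does not even use that $t$ is in normal form: if $t$ is not normal then no $s$ contributes and the coefficient is already $0$.) Hence $\set{s\in\taylor M\st\NF(s)_t\neq 0}\subseteq\taylor M\cap\cone t$.

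It then remains to bound $\taylor M\cap\cone t$, and this is where the hypothesis $t\in\bigcup\fT$ enters: choose $a\in\fT$ with $t\in a$; since $\set t\subseteq a$ and the cone operator is monotone (clear from Definition~\ref{def:cones}), $\cone t=\cone{\set t}\subseteq\cone a$, so $\taylor M\cap\cone t\subseteq\taylor M\cap\cone a$, and the right-hand side is finite because $\cone a\in\cones\fT$ and $\taylor M\in\dual{\cones\fT}$. Thus the index set of the sum defining $\NF(\taylor M)_t$ is finite and all its terms are finite, so $\NF(\taylor M)_t$ is finite. I do not expect any genuine obstacle here: the lemma is essentially a reformulation of what it means for $\taylor M$ to be dual to $\cones\fT$, the only mildly delicate step being the identification $\NF(s)_t\neq 0\Rightarrow s\in\cone t$ above.
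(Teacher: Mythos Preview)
Your argument is correct and is exactly the unfolding the paper has in mind: the paper gives no proof beyond saying the lemma is immediate from the definition of cones, and what you wrote is precisely that unfolding—$\NF(s)_t\neq 0$ implies $s\in\cone t\subseteq\cone a$ for any $a\in\fT$ containing $t$, whence the contributing indices lie in the finite set $\taylor M\cap\cone a$.
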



Applying Corollary~\ref{cor:realizability} and Lemma~\ref{lemma:NF} to a structure like $\boundedStructure$ or $\fS_{\text{sgl}}$, we get:
\begin{corollary}\label{cor:NF}
 Given a non-deterministic $\lambda$-term $M$, if $M$ is strongly normalizable, then $\NF(\taylor M)_t$ is finite for all $t\in\resourceTerms$.
\end{corollary}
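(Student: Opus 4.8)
\textbf{Proof proposal for Corollary~\ref{cor:NF}.}
The plan is simply to chain together the results already assembled in Figure~\ref{fig:main_results}, instantiating the abstract structure $\fT$ with a concrete one that is simultaneously good enough for the realizability argument and covers all of $\resourceTerms$. Concretely, I would take $\fT=\boundedStructure$, the structure of bounded sets of resource terms (Definition~\ref{def:finiteness structures}).

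First I would invoke Corollary~\ref{cor:realizability}: since $\boundedStructure$ is dispersed, hereditary and expandable (Remark~\ref{rk:examples_realizabilit_conditions}), for every strongly normalizable $M\in\ndTerms$ we get $\taylor M\in\dual{\cones{\boundedStructure}}=\finitarySets[\boundedStructure]$. Second, I would apply Lemma~\ref{lemma:NF} with the same structure: from $\taylor M\in\finitarySets[\boundedStructure]$ it follows that $\NF(\taylor M)_t$ is finite for every $t\in\bigcup\boundedStructure$. The only remaining point is to check that $\bigcup\boundedStructure=\resourceTerms$, which is immediate: any singleton $\set t$ is bounded (take $n$ to be the maximal cardinality of the finitely many bags occurring in the single term $t$), so $t\in\bigcup\boundedStructure$. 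Hence $\NF(\taylor M)_t$ is finite for all $t\in\resourceTerms$, as claimed. (One could equally well run the argument with $\fS_{\text{sgl}}$ in place of $\boundedStructure$, since $\fS_{\text{sgl}}$ is also dispersed, hereditary — no, $\fS_{\text{sgl}}$ is \emph{not} hereditary, so one must use $\boundedStructure$ or another hereditary structure; but $\fS_{\text{sgl}}\subseteq\boundedStructure$ gives $\finitarySets[\boundedStructure]\subseteq\finitarySets[\fS_{\text{sgl}}]$ and $\bigcup\fS_{\text{sgl}}=\resourceTerms$ as well, so Lemma~\ref{lemma:NF} applies to $\fS_{\text{sgl}}$ too.)

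There is essentially no obstacle here: all the work has been done in the preceding sections, and the corollary is a bookkeeping step. The one subtlety worth spelling out in the write-up is the recollection of \emph{why} $\taylor M\in\finitarySets[\fT]$ implies pointwise finiteness of $\NF$ — namely that, by Definition~\ref{def:cones} of the cone $\cone{\set t}$, the resource terms $s$ with $s\reducesTo t$ (equivalently, with $t$ appearing in $\NF(s)$, by Proposition~\ref{prop:nf_resource} and Definition~\ref{def:reduce}) are exactly those in $\cone{\set t}$, so $\taylor M\perp\cone{\set t}$ says precisely that only finitely many $s\in\taylor M$ contribute to the coefficient of $t$ in $\NF(\taylor M)$; this is the content of Lemma~\ref{lemma:NF}, so it need not be reproved. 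Thus the whole proof is a two-line application of Corollary~\ref{cor:realizability} and Lemma~\ref{lemma:NF}.
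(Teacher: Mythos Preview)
Your proposal is correct and matches the paper's own proof, which is literally one line: apply Corollary~\ref{cor:realizability} and Lemma~\ref{lemma:NF} to a structure such as $\boundedStructure$ or $\fS_{\text{sgl}}$. Your parenthetical hesitation about $\fS_{\text{sgl}}$ not being hereditary is unnecessary, since Corollary~\ref{cor:realizability} already explicitly covers all subsets of $\boundedStructure$ (including $\fS_{\text{sgl}}$) via the monotonicity of duality; but your workaround is exactly that argument, so no harm done.
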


\bibliographystyle{splncs}
\bibliography{bib}

\end{document}